\documentclass[acmtog]{acmart}

\AtBeginDocument{%
  }

\setcopyright{none}
\renewcommand\footnotetextcopyrightpermission[1]{} 
\copyrightyear{2026}
\acmYear{2026}

\usepackage{svg}
\usepackage{comment}
\usepackage{version}
\usepackage[normalem]{ulem}

\usepackage{mathtools} 
\usepackage{textcomp}
\usepackage{siunitx} 
\usepackage{physics} 
\usepackage{stmaryrd}
\usepackage{scalerel}
\usepackage{bm}

\usepackage{ifthen}
\usepackage{csquotes}
\usepackage{calc}
\usepackage{longtable}

\usepackage{algorithm}
\usepackage{algorithmicx}
\usepackage{algpseudocode}
\usepackage{listingsutf8}

\usepackage{array}
\usepackage{multirow}

\usepackage{wrapfig} 
\usepackage[percent]{overpic}
\usepackage{subcaption}
\usepackage{supertabular}

\usepackage{cleveref}

\definecolor{DenisColor}{rgb}{0.55,0.35,0.05}
\definecolor{DanieleColor}{rgb}{0.3411764706,0.02352941176,0.5490196078}
\definecolor{LeyiColor}{rgb}{0.99,0.37,0.07}
\definecolor{YixinColor}{rgb}{0.6274509804,0.2039215686,0.4470588235}
\definecolor{MichaelColor}{rgb}{0.1, 0.5, 0.5}

\newcommand{\DZ}[1]{{\leavevmode\color{DenisColor} Denis: #1 $\qed$}}

\newcommand{\todo}[1]{\textcolor{red}{TODO: #1 $\qed$}}

\providecommand{\finalversion}{0} 
\ifthenelse{\equal{\finalversion}{1}}
{
	\renewcommand{\denis}[1]{}
    \renewcommand{\DZ}[1]{}
	\renewcommand{\toref}[1]{}
	\renewcommand{\tocite}[1]{}
	\renewcommand{\todo}[1]{}
	\renewcommand{\warning}[1]{}
	\renewcommand{\note}[1]{}
}
{}

\definecolor{forestgreen}{rgb}{0.13,0.54,0.13}
\definecolor{darkblue}{rgb}{0,0,.5}
\hypersetup{
	citecolor=forestgreen, 
	linkcolor=darkblue, 
	urlcolor=darkblue, 
}

\newcommand{\parheading}[1]{{\bfseries #1.}}

\crefname{algocf}{alg.}{algs.}
\Crefname{algocf}{Algorithm}{Algorithms}

\crefname{appsec}{Appendix}{Appendices}

\let\originalleft\left \let\originalright\right
\renewcommand{\left}{\mathopen{}\mathclose\bgroup\originalleft}
\renewcommand{\right}{\aftergroup\egroup\originalright}

\renewcommand{\geq}{\geqslant}

\DeclareFontFamily{U}{mathx}{\hyphenchar\font45} \DeclareFontShape{U}{mathx}{m}{n}{<-> mathx10}{}
\DeclareSymbolFont{mathx}{U}{mathx}{m}{n} \DeclareMathAccent{\widebarc}{0}{mathx}{"73}

\title
      {BijectiveRemesh: Maintaining Bijective Mappings for Data Transfer Across
       Remeshed Manifolds}

\author{Leyi Zhu}
\affiliation{%
  \institution{New York University}
  \country{USA}
}

\author{Michael Tao}
\affiliation{%
  \institution{New York University}
  \country{USA}
}

\author{Yixin Hu}
\affiliation{%
  \institution{Tencent America}
  \country{USA}
}

\author{Daniele Panozzo}
\affiliation{%
  \institution{New York University}
  \country{USA}
}

\author{Denis Zorin}
\affiliation{%
  \institution{New York University}
  \country{USA}
}

\renewcommand{\shortauthors}{Zhu et al.}

\begin{abstract}
  We introduce \textit{BijectiveRemesh}, a robust algorithm for maintaining a
continuous, bijective mapping across complex remeshing sequences on both 2D triangle
surfaces and 3D tetrahedral meshes. Unlike traditional data transfer methods
that rely on interpolation or projection, our approach constructs a mathematically
rigorous composite map $f: \mathcal{M}_{\text{input}}\to \mathcal{M}_{\text{output}}$
by chaining local bijective atlases defined for each primitive operation. 

Our framework represents the overall mapping as a composition of local bijective atlases, one per remeshing operation. Building upon successive self-parameterization (SSP)~\cite{Liu2021}, we introduce a \textit{Shared Scaffold} structure for 2D triangle meshes that enforces global bijectivity through local orientation preservation. We extend this approach to handle edge splits, edge swaps, and vertex smoothing beyond the original edge collapses. For 3D tetrahedral meshes, we generalize the local atlas construction using Steinitz's Theorem and Maxwell-Cremona lifting to ensure valid embeddings. This enables exact tracking of geometric entities—points, curves, and surfaces—across remeshing, with applications from texture transfer to volumetric simulations.

\end{abstract}

\begin{document}

\begin{teaserfigure}
  \centering
  \includegraphics[width=0.95\linewidth]{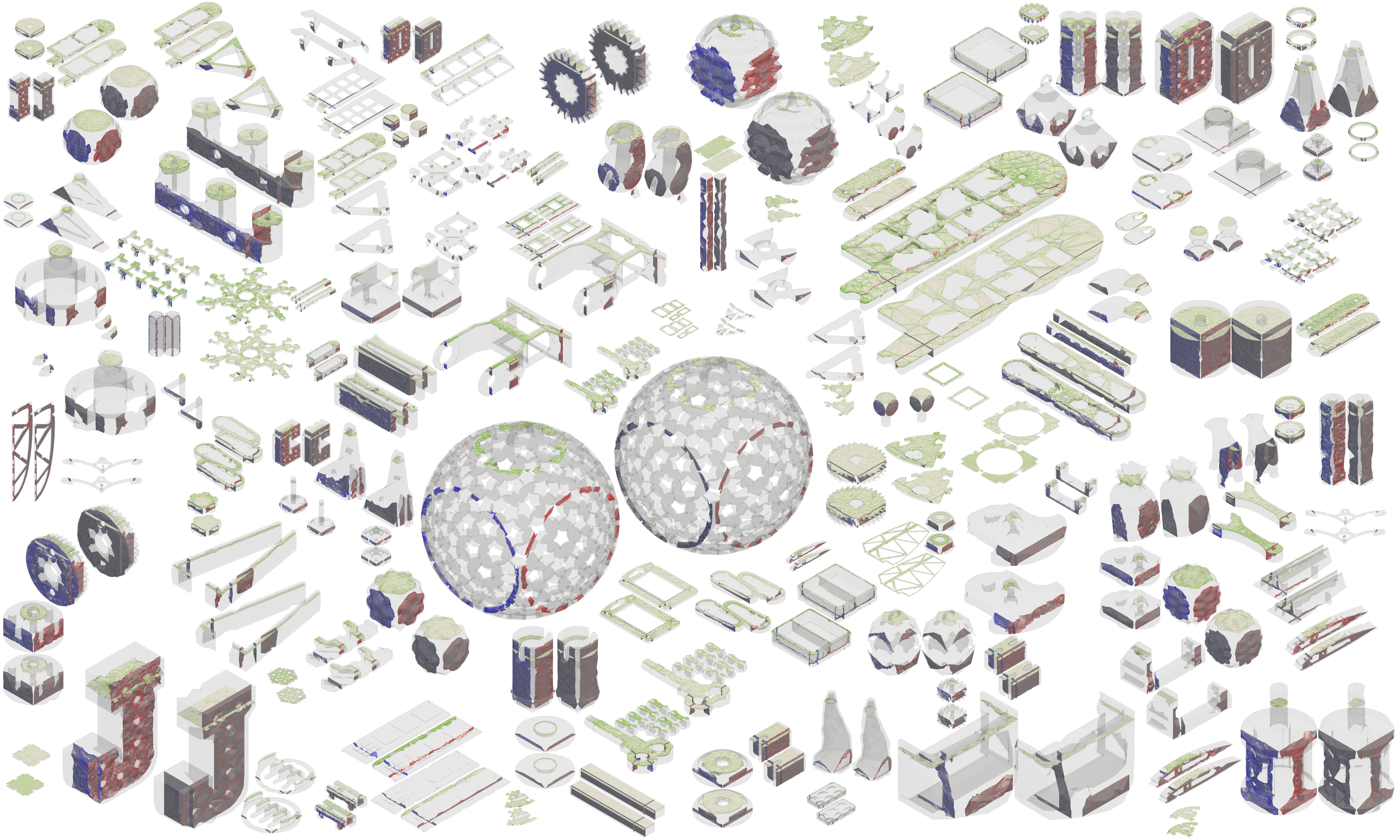}
  \caption{\textbf{Bijective surface tracking through tetrahedral mesh
  simplification.} We track axis-aligned planar surfaces (parallel to the $xy$, $x
  z$, and $yz$ planes) through tetrahedral mesh simplification on models from the
  Thingi10K dataset~\cite{Zhou2016}. For each model, surfaces are sampled on the
  simplified output mesh $\mathcal{M}_{\text{output}}$ (right) and back-tracked to
  the original mesh $\mathcal{M}_{\text{input}}$ (left). Our bijective framework
  rigorously preserves the intersection topology among surfaces: where surfaces intersect
  on the output, they intersect consistently on the input, maintaining the combinatorial
  structure of intersection curves throughout the tracking process.}
  \label{fig:teaser}
\end{teaserfigure}

\maketitle

\pagestyle{plain}
\thispagestyle{plain}


\section{Introduction}
Remeshing is ubiquitous in computer graphics and scientific computing, encompassing
tasks such as quality improvement, resolution adaptation, and artifact repair. A
fundamental challenge across all these operations is \textit{data transfer}:
propagating scalar fields, texture coordinates, material attributes, or boundary
conditions from the original mesh to the remeshed version with high fidelity. As
meshes undergo decimation, refinement, or optimization, maintaining precise correspondences
between initial and remeshed states is critical to prevent information loss and numerical
diffusion.

Conventional solutions for maintaining correspondences between meshes~\cite{Kobbelt1998}
rely on heuristics such as barycentric interpolation or closest-point
projections~\cite{Botsch2010}. While effective for minor adjustments, these methods
lack topological guarantees. Furthermore, when topological operations induce
substantial changes (e.g., edge collapses or swaps), these heuristic
correspondence often fail to be bijective, causing artifacts like UV seam
tearing, texture deviation, or loss of feature lines. Constructing a global
bijective parameterization could resolve these issues, but is computationally prohibitive
and prone to failure on complex non-disk-topology shapes~\cite{Kraevoy2004}.

We propose a scalable framework that maintains strict bijective mappings throughout
remeshing, enabling accurate data transfer between the original and remeshed
states. Rather than computing a single global parameterization, our method encodes the
overall map as a composition of local maps induced by atomic operation. This mesh
evolution "history", enables precise bidirectional data transfer regardless of
connectivity changes.

Our approach builds on two key ideas. First, for 2D triangle meshes, we construct
local atlases using a shared scaffold structure~\cite{Jiang2017} that ensures
bijectivity through locally injective optimization~\cite{Rabinovich2017}.
Second, for 3D tetrahedral meshes, we leverage Steinitz's Theorem~\cite{ribo2011small}
to construct convex polyhedral embeddings that geometrically prevent tetrahedral
overlaps. These bijective local atlases enable robust tracking of geometric entities—including
points, curves, and surfaces—throughout complex remeshing sequences. We
demonstrate our framework on applications including texture transfer across
remeshed models~\Cref{fig:texture_transfer_ogre,fig:texture_transfer_spot} and
preserving organ segmentations in medical CT data under mesh adaptation (\Cref{fig:ct_tracking_result}) and validate our topology preservation on the Thingi10K dataset~\citep{Zhou2016}.

\section{Related Work}
\label{sec:related}

Maintaining bijective correspondences is a long-standing challenge in geometry processing,
spanning surface parameterization, volumetric mapping, and attribute transfer.

\parheading{Strict Surface Homeomorphisms}
Bijective surface parameterization traditionally relied on Tutte’s embedding theorem~\cite{Tutte1963},
which guarantees an injective mapping for 3-connected planar graphs into convex
domains using barycentric coordinates~\cite{Floater2003}. While robust, these
linear methods are restricted to fixed convex boundaries. To allow boundary movement
and minimize isometric distortion, non-linear optimization frameworks such as
SLIM~\cite{Rabinovich2017} and Total Lifted Content (TLC)~\cite{du2020lifting} have
been proposed. However, these methods either require an already-injective
initialization or lack global bijectivity guarantees on complex topologies. Our 2D
Shared Scaffold structure builds upon the Simplicial Complex Augmentation Framework
(SCAF)~\cite{Jiang2017}, which ensures global bijectivity by reducing the
problem to local orientation preservation within an augmented tessellation of
the ambient space~\cite{Lipman2014}.

\parheading{Volumetric Mapping and Topological Obstructions}
Extending bijectivity guarantees to 3D tetrahedral meshes is significantly more complex
due to topological obstructions. Notably, the 3D analog of Tutte’s theorem does not
hold universally; even with convex boundaries, because internal tetrahedra can invert if
the mesh contains specific forbidden minors like $K_{6}$ or $K_{3,3,1}$~\cite{Alexa2023,
Floater2006}. Constructive methods such as Simplicial Foliations~\cite{Campen2016}
and the Shrink-and-Expand (SaE) framework \cite{Nigolian2023, Nigolian2024}
provide theoretical guarantees for shellable meshes but often incur extreme
computational costs and memory-intensive mesh refinement. In contrast, our
approach utilizes Steinitz’s Theorem~\cite{Steinitz1922} and Maxwell-Cremona
lifting~\cite{ribo2011small} to construct local convex polyhedral embeddings for
boundary operations. This ensures validity by exploiting the geometric property that
convexity of the boundary polyhedron prevents interior overlap, avoiding the
need for expensive global constructive schemes.

\parheading{Mapping Preservation and Attribute Transfer}
The "Bijective Prism Shell"~\cite{Jiang2020, Jiang2021} establishes a common domain
for spatially close surfaces by constructing a volumetric shell between them,
though the approach is limited to surfaces within a narrow distance threshold
and requires careful shell thickness tuning. Another direction is the use of "common
triangulations"~\cite{Schmidt2023}, which decouple map resolution from input
complexity by adaptively refining a shared triangulation. However, this approach
is restricted to genus-0 surfaces (mapping via the sphere) and loses the
remeshing tracking sequence, requiring manual selection of landmark point pairs. Alternative approximate methods
such as reversible harmonic maps~\cite{Ezuz2019} and low-resolution
correspondence~\cite{maggioli2024rematching} trade accuracy for efficiency, but inherently
incur information loss during transfer due to their continuous relaxation or downsampling
strategies.

The closest work to ours is the successive self-parameterization (SSP) framework~\cite{Liu2021},
which constructs bijective mappings by composing local atlases across remeshing
operations. The atlases are constructed by a joint-flattening strategy where the
3D coordinates of pre- and post-operation patches (the localized mesh regions affected
by each operation) are simultaneously parameterized into a shared 2D domain. While
this approach successfully distributes distortion between mesh states, it
suffers from two key limitations: First, the optimization lacks robustness
guarantees and can fail to converge or produce inverted elements when mesh
quality is poor. We observe frequent failures on models from the Thingi10K dataset~\cite{Zhou2016}—for
instance, model \#1706476 exhibits failure of SSP because it prevents any collapse
operations from completing (\Cref{fig:ssp_comparison}). Second, SSP is designed
specifically for edge collapse operations in the context of mesh coarsening and
does not generalize to other remeshing primitives that we support, such as edge splits
(for refinement), edge swaps (for quality improvement), or vertex smoothing (for
geometry optimization). Additionally, the framework does not extend to 3D tetrahedral
meshes, where boundary operations require fundamentally different geometric
constructions.

\begin{figure}[htb]
    \centering
    \begin{subfigure}
        [b]{0.48\linewidth}
        \centering
        \includegraphics[width=\linewidth]{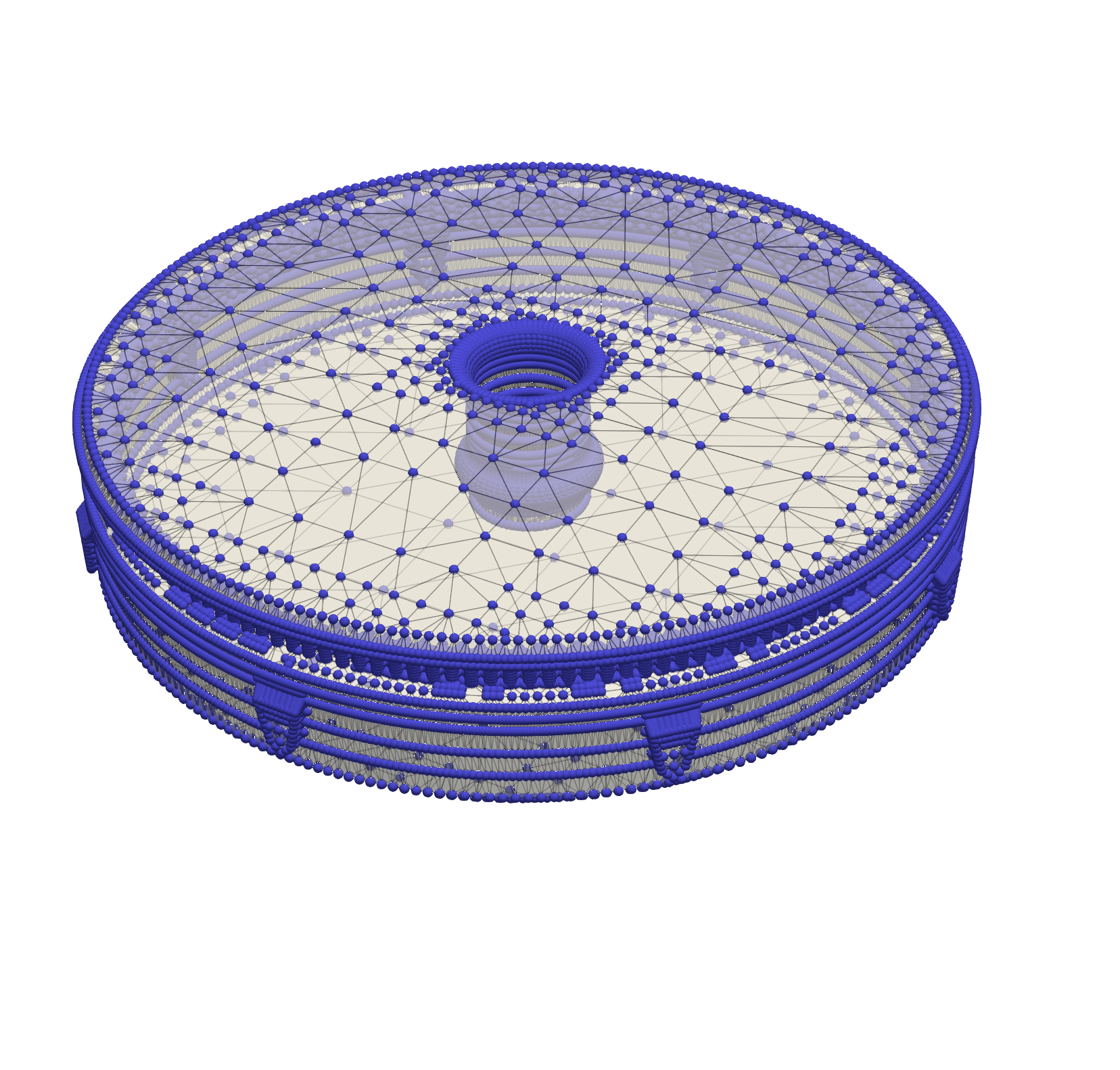}
        \caption{SSP method fails to complete any edge collapses}
        \label{fig:ssp_fail}
    \end{subfigure}
    \hfill
    \begin{subfigure}
        [b]{0.48\linewidth}
        \centering
        \includegraphics[width=\linewidth]{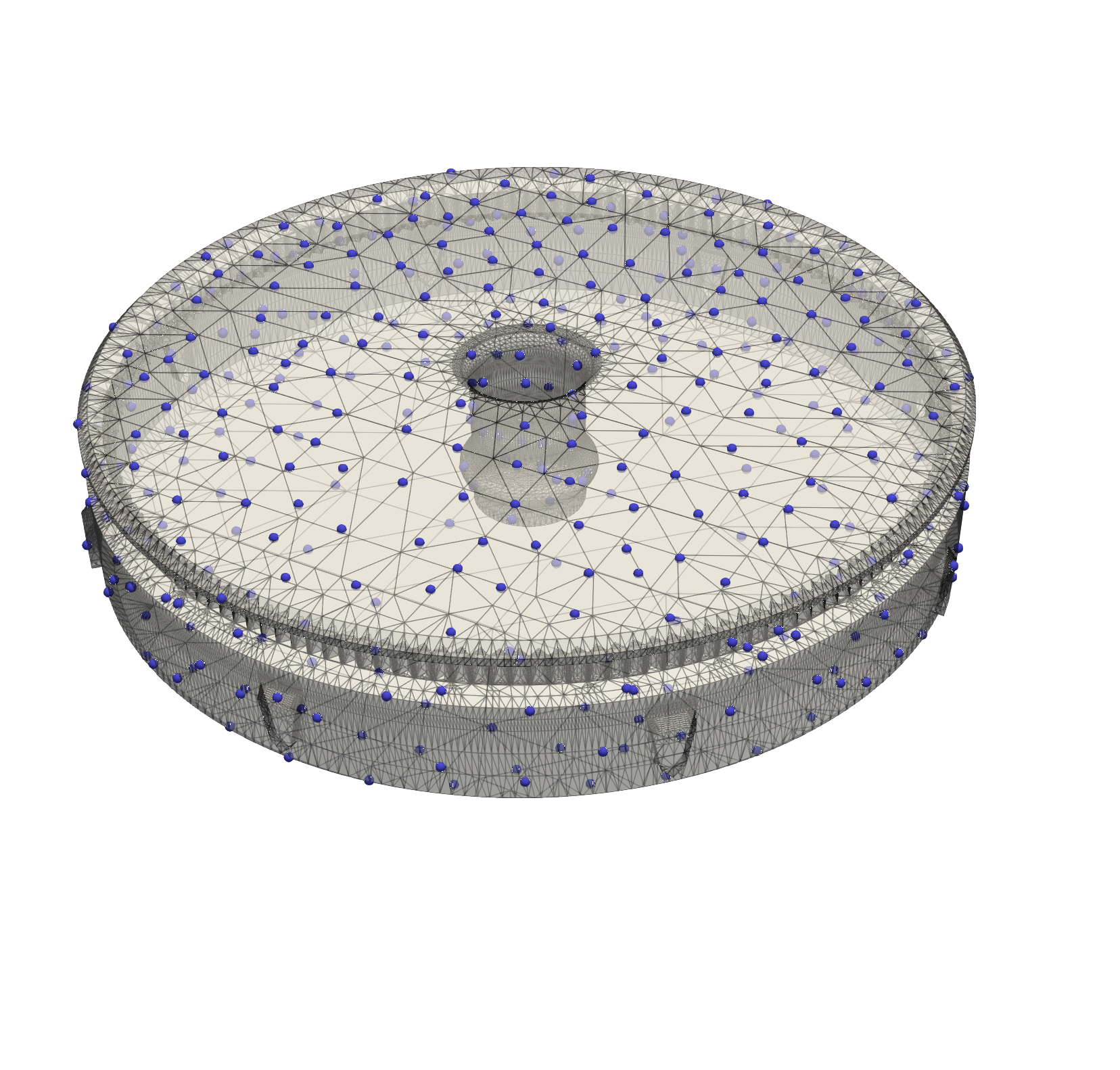}
        \caption{Our method robustly generates coarse-to-fine mapping}
        \label{fig:our_success}
    \end{subfigure}
    \caption{Comparison on Thingi10K model \#1706476. We visualize the coarse-to-fine
    mapping by transferring points from the decimated mesh back to the original fine
    mesh using successive parameterization. \textbf{Left:} The SSP framework~\cite{Liu2021}
    fails during optimization, preventing any edge collapse operations from
    completing. \textbf{Right:} Our shared scaffold method successfully
    constructs bijective mappings throughout the decimation process, enabling robust
    data transfer between mesh representations.}
    \label{fig:ssp_comparison}
\end{figure}

\section{Method}
\label{sec:method}

Given an input manifold mesh $\mathcal{M}_{\text{input}}$ and a sequence of remeshing
operations $\mathcal{O}= \{o_{1}, o_{2}, \dots, o_{n}\}$, our goal is to
maintain a continuous, bijective mapping $f: \mathcal{M}_{\text{input}}\to \mathcal{M}
_{\text{output}}$ throughout the entire remeshing process. We assume the input
$\mathcal{M}$ is a manifold simplicial complex, which can be either a 2D triangle
mesh embedded in $\mathbb{R}^{3}$ or a 3D tetrahedral mesh in $\mathbb{R}^{3}$.
The operation sequence $\mathcal{O}$ may consist of various topological and
geometric modifications, including edge collapses, edge splits, edge swaps, and
vertex smoothing.

We follow the insight of \citet{Liu2021} by noting that, while the cumulative
effect of remeshing is global, each individual operation $o_{i}: \mathcal{M}_{i-1}
\to \mathcal{M}_{i}$ only acts on a small, localized patch (the Region of Interest,
Figure~\ref{fig:2d_ops}). We denote these local patches as
$\mathcal{P}_{i}^{\text{before}}\subset \mathcal{M}_{i-1}$ and
$\mathcal{P}_{i}^{\text{after}}\subset \mathcal{M}_{i}$, corresponding to the mesh
state before and after the operation.

To construct a bijection between these two patches, we embed both into a shared
geometric domain $\mathcal{C}_{i}$ via embeddings
$e_{i}^{\text{before}}: \mathcal{P}_{i}^{\text{before}}\to \mathcal{C}_{i}$ and
$e_{i}^{\text{after}}: \mathcal{P}_{i}^{\text{after}}\to \mathcal{C}_{i}$.~(Figure~\ref{fig:curve_tracking})
This establishes a canonical bijection:
\begin{equation}
  \varphi_{i}= (e_{i}^{\text{after}})^{-1}\circ e_{i}^{\text{before}}: \mathcal{P}_{i}^{\text{before}}
  \to \mathcal{P}_{i}^{\text{after}}.
\end{equation}
Since the operation only modifies the local region, this local bijection extends
to a global map $\varphi_{i}: \mathcal{M}_{i-1}\to \mathcal{M}_{i}$ that is the identity
outside of $\mathcal{P}_{i}^{\text{before}}$. The global bijective mapping is
then obtained by composition:
\begin{equation}
  \label{eq:global-composition}\Phi = \varphi_{n}\circ \varphi_{n-1}\circ \dots \circ \varphi_{1}.
\end{equation}

\noindent
This formulation allows us to handle diverse operations---including splits for
refinement and collapses for coarsening---within a unified framework.

\subsection{Constructing Bijective Maps Using Local Atlases}
\label{subsec:local-atlas-construction} We use different bijective maps for
triangle meshes and tetrahedral mesh, but they follow the same fundamental principle
of constructing a shared parametric space through which we can perform mapping.
The main difficulty is that remeshing can change the domain of the mesh, so our bijective
maps must be able to handle a certain amount of distortion. For triangle meshes
we are minimizing the distortion of points in 3D and most operations introduce
some distortion to the geometry that must be minimized. On the other hand,
tetrahedral meshes are embedded in 3D so for operations on the interior the
identity map to $\mathbb{R}^{3}$ is the perfect bijective map and the entire
challenge is how to deal with operations on the boundary that distort the domain
of the tetrahedral mesh.

\subsubsection{2D: Triangle Mesh}
\label{subsubsec:2d-triangle-mesh}
\parheading{Background: Joint Flattening}

Our 2D approach builds upon the joint flattening strategy introduced by~\cite{Liu2021}.
The key idea is to leverage the fact that when the patch $\mathcal{P}^{\text{before}}$
of a local operation lies on the interior of $\mathcal{M}_{i-1}$ then $\mathcal{P}
^{\text{after}}$ lies on the interior of $\mathcal{M}_{i}$ and their boundaries remain
geometrically consistent:
$\partial \mathcal{P}^{\text{before}}\equiv \partial \mathcal{P}^{\text{after}}$.
This allows both patches to be parameterized into a shared 2D domain by minimizing
a joint distortion energy:
\begin{equation}
  \label{eq:joint-distortion}\min_{\mathbf{U}}\quad E(\mathcal{P}^{\text{before}}
  , \mathbf{U}^{\text{before}}) + E(\mathcal{P}^{\text{after}}, \mathbf{U}^{\text{after}}
  )
\end{equation}
subject to shared boundary coordinates:
\begin{equation}
  \label{eq:boundary-constraint}\mathbf{u}_{v}^{\text{before}}= \mathbf{u}_{v}^{\text{after}}
  , \quad \forall v \in \partial \mathcal{P}.
\end{equation}
where $\mathbf{U}^{\text{before}}$ and $\mathbf{U}^{\text{after}}$ are the UV coordinates
for all vertices in $\mathcal{P}^{\text{before}}$ and $\mathcal{P}^{\text{after}}$
respectively, $\mathbf{u}_{v}\in \mathbb{R}^{2}$ denotes the UV position of
vertex $v$, and $E$ is a standard distortion metric such as Symmetric Dirichlet~\cite{Smith2015}
or ARAP energy~\cite{Sorkine2007}.

For edge collapses on the boundary, where vertex $i$ is collapsed toward vertex
$j$, a colinearity constraint is applied: the collapsed position
$\mathbf{u}_{i}$ must remain colinear with its neighboring boundary vertices $j$
and $k$ in the parametric domain. Although this treatment ensures the boundary shape
remains well-defined for both connectivity states, it does not guarantee
bijectivity: the optimization can produce overlapping or inverted triangles,
particularly on poor-quality meshes. To rigorously enforce bijectivity, we
augment this framework with a \textit{shared scaffold structure}. We first
review the Simplicial Complex Augmentation Framework (SCAF)~\citep{Jiang2017}
that provides the theoretical foundation, then describe our construction of the
shared scaffold for joint flattening.

\parheading{Background: Simplicial Complex Augmentation Framework}
SCAF guarantees bijectivity by adding an auxiliary "scaffold" triangulation around
a mesh patch to make it so that global bijectivity can preserved by maintaining local
injectivity. Given a 2D mesh patch $\mathcal{P}$ to be parameterized, SCAF constructs
a scaffold $\mathcal{S}$ between $\mathcal{P}$ and its bounding box so $\mathcal{D}
= \mathcal{S}\cup \mathcal{P}$ is a simplicial complex that tessellates a convex
domain (\Cref{fig:shared-scaffold-opt}).

The critical property of SCAF is that any potential self-intersections of $\mathcal{P}$
are captured by checking for \emph{local injectivity} (i.e., has positive Jacobian
determinant per element) on $\mathcal{D}$ so local and global bijectivity become
identical. then the entire mapping is \emph{globally bijective}. As such, global
overlaps can be reduced to per-element orientation checks. We can therefore apply
locally injective optimization methods like SLIM~\cite{Rabinovich2017} with
an orientation-preserving line search to optimize distortion while guaranteeing global
bijectivity.

\parheading{Shared Scaffold for Joint Flattening}
Since $\mathcal{P}^{\text{before}}$ and $\mathcal{P}^{\text{after}}$ are both
parameterized into the same 2D domain and share a common boundary
$\partial\mathcal{P}$, we can naturally construct a \emph{single} shared
scaffold $\mathcal{S}$ for \emph{both} patches. Specifically, we form two augmented
complexes:
\begin{equation}
  \mathcal{D}^{\text{before}}= \mathcal{P}^{\text{before}}\cup \mathcal{S}, \qquad
  \mathcal{D}^{\text{after}}= \mathcal{P}^{\text{after}}\cup \mathcal{S}.
\end{equation}

We then minimize the joint distortion energy over the vertex positions of both augmented
complexes:
\begin{equation}
  \min_{\mathbf{U}}\quad E(\mathcal{D}^{\text{before}}, \mathbf{U}^{\text{before}}
  ) + E(\mathcal{D}^{\text{after}}, \mathbf{U}^{\text{after}}),
\end{equation}
subject to:
\begin{itemize}
  \item \textbf{Shared boundary}: $\mathbf{u}_{v}^{\text{before}}= \mathbf{u}_{v}
    ^{\text{after}}$ for all $v \in \partial\mathcal{P}$.

  \item \textbf{Shared scaffold}: $\mathbf{u}_{v}^{\text{before}}= \mathbf{u}_{v}
    ^{\text{after}}$ for all $v \in \mathcal{S}$.
\end{itemize}

By using SCAF and SLIM the result of our optimization is guaranteed to produce a
globally bijective embedding of $\mathcal{P}^{\text{before}}$ and
$\mathcal{P}^{\text{after}}$. Furthermore, since both patches share the same non-overlapping
scaffold $\mathcal{S}$, a bijective correspondence between $\mathcal{P}^{\text{before}}$
and $\mathcal{P}^{\text{after}}$ can be established through the resulting common
parametric domain. Figure~\ref{fig:shared-scaffold-opt} illustrates this shared
scaffold framework.

\parheading{Application to Different Operation Types}
The shared scaffold framework applies uniformly across various remeshing operations.
Figure~\ref{fig:2d_ops} illustrates the local patches for each operation
type. Using the open star $\text{St}$ (\citep{Munkres_2018}), the local patches we
use are:

\begin{enumerate}
  \item \textit{Edge Collapse} $(i,j) \to k$: $\mathcal{P}^{\text{before}}= \text{St}
    (i) \cup \text{St}(j)$ (union of 1-rings of both vertices),
    $\mathcal{P}^{\text{after}}= \text{St}(k)$ (1-ring of merged vertex).

  \item \textit{Edge Split} on edge $(i,j)$ creating vertex $k$: $\mathcal{P}^{\text{before}}
    = \text{St}(i,j)$ (triangles incident to the edge),
    $\mathcal{P}^{\text{after}}= \text{St}(k)$ (1-ring of new vertex).

  \item \textit{Edge Swap}: $\mathcal{P}^{\text{before}}$ and $\mathcal{P}^{\text{after}}$
    are both the two triangles sharing the edge.

  \item \textit{Vertex Smoothing} of vertex $v$: $\mathcal{P}^{\text{before}}= \mathcal{P}
    ^{\text{after}}= \text{St}(v)$ (1-ring of the vertex, connectivity unchanged).
\end{enumerate}

\begin{figure}[htb]
  \centering
  \begin{subfigure}
    [b]{0.9\linewidth}
    \centering
    \includegraphics[width=\linewidth]{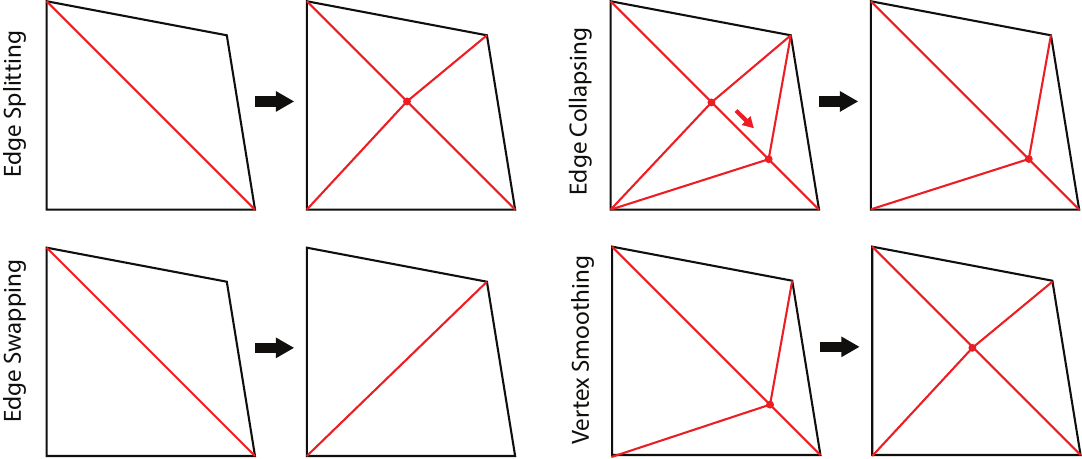}
  \end{subfigure}
  \caption{\textbf{Patches for different remeshing operations.}}
  \label{fig:2d_ops}
\end{figure}


\begin{figure}[t]
  \centering
  \begin{subfigure}
    [b]{0.58\linewidth}
    \centering
    \includegraphics[width=\linewidth]{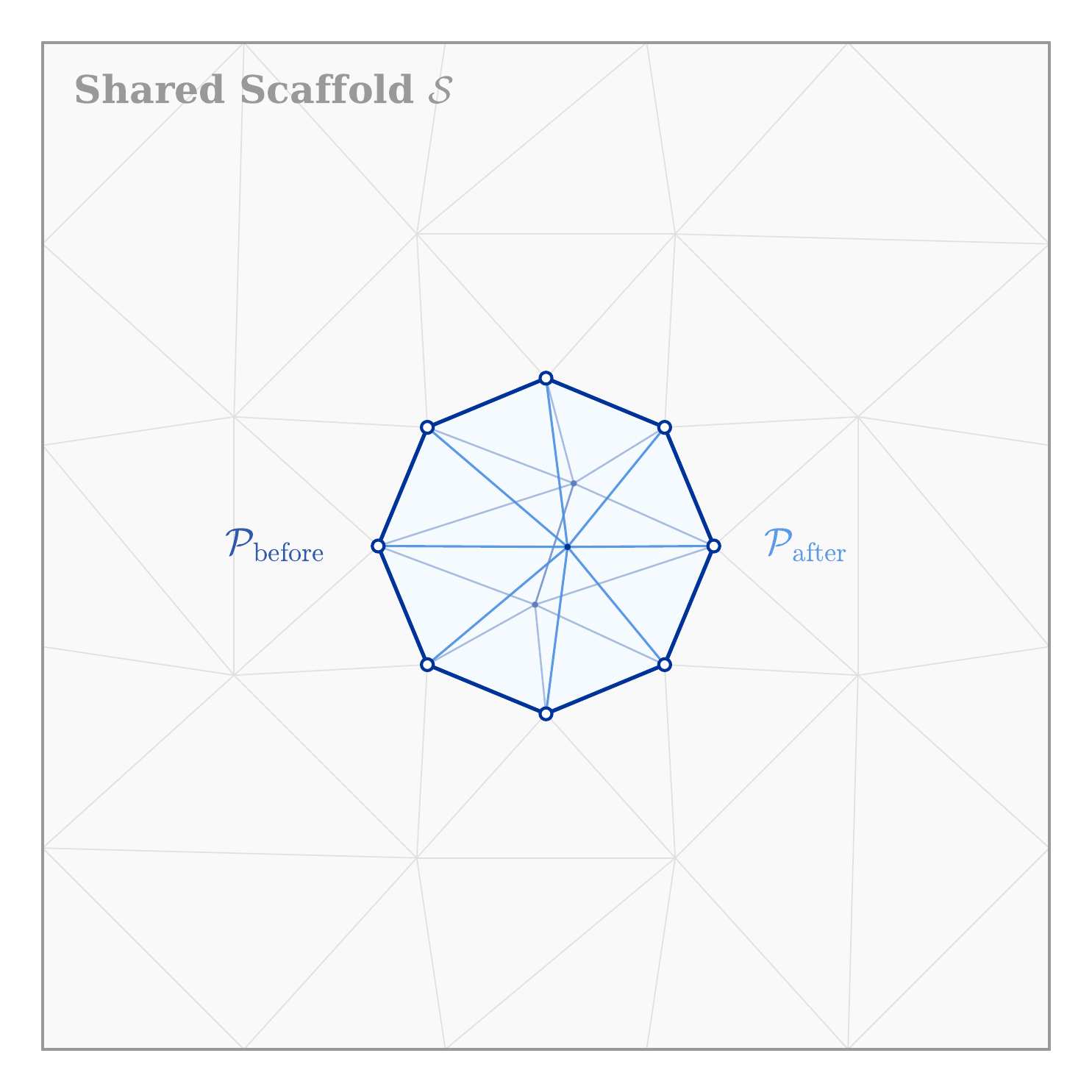}
    \caption{Shared Scaffold Structure}
    \label{fig:scaf-init}
  \end{subfigure}
  \hfill
  \begin{subfigure}
    [b]{0.4\linewidth}
    \centering
    \includegraphics[width=\linewidth]{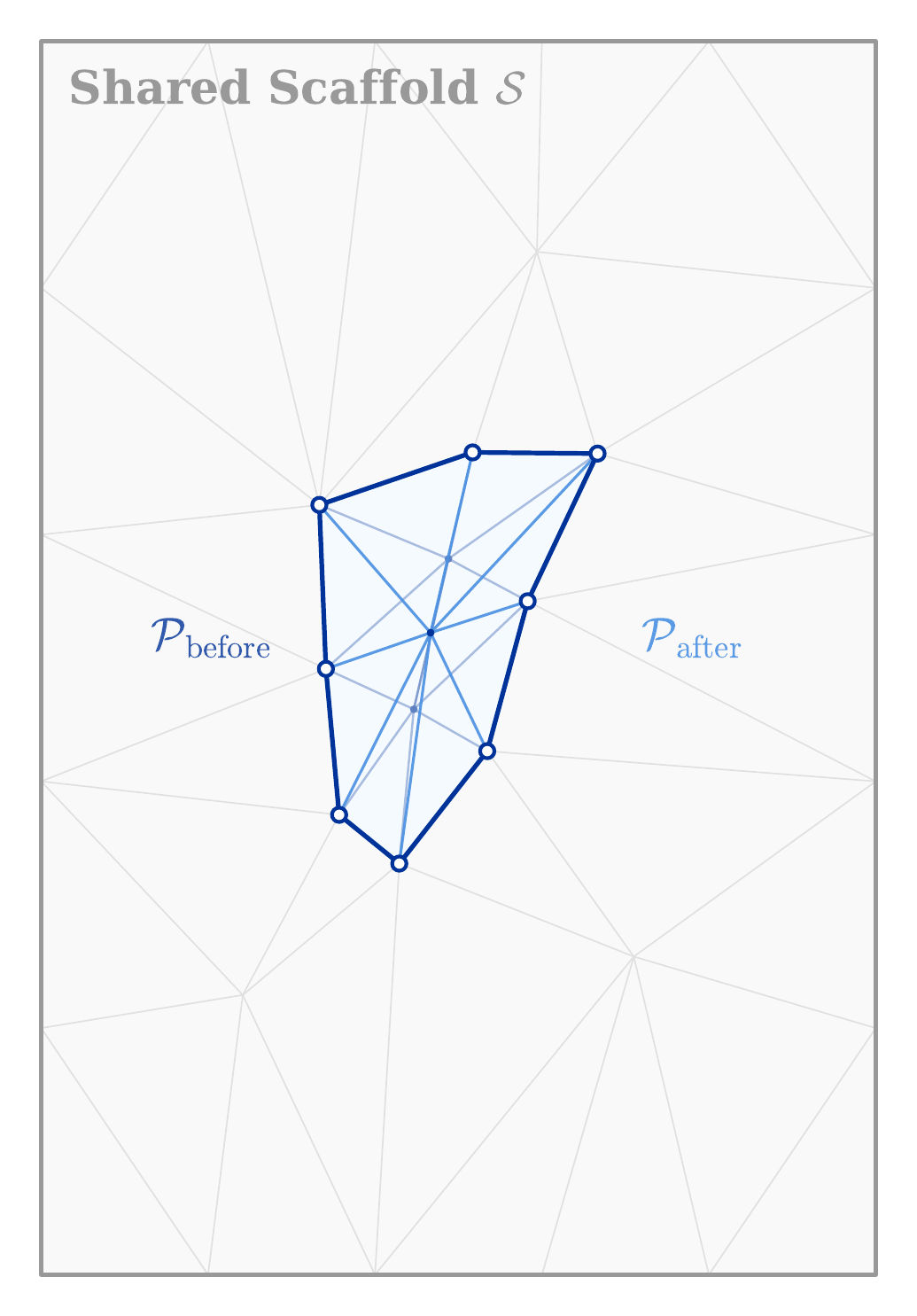}
    \caption{After Joint Optimization}
    \label{fig:scaf-opt}
  \end{subfigure}

  \caption{\textbf{Shared scaffold framework for bijective atlas construction.} (a)
  Both $\mathcal{P}^{\text{before}}$ and $\mathcal{P}^{\text{after}}$ are embedded
  within the same scaffold $\mathcal{S}$ (gray), sharing a common boundary
  $\partial\mathcal{P}$ (dark blue). (b) After joint optimization, the scaffold
  $\mathcal{S}$ is deformed to minimize distortion while maintaining the sharing
  constraint ($\mathbf{u}_{v}^{\text{before}}= \mathbf{u}_{v}^{\text{after}}$ for
  all $v \in \mathcal{S}\cup \partial\mathcal{P}$). Local injectivity via SLIM optimization
  guarantees global bijectivity for both augmented complexes through the SCAF property,
  establishing bijective correspondence via the shared parametric domain.}
  \label{fig:shared-scaffold-opt}
\end{figure}

\subsubsection{3D: Tetrahedral Mesh}
\label{subsubsec:3d-tetrahedral-mesh}

Recall that for tetrahedral meshes interior operations constructing
bijective mapping is trivial and no deformation is necessary at all. In such
cases, we extract the affected region (the closed star of the operation) as the
local atlas and use identity mapping between the old and new connectivity to
maintain bijectivity.

The primary challenge arises when handling \textbf{boundary operations}, which
introduce geometric changes to the mesh surface. Of the four operations we choose use in our system, three can be implemented concisely:

\parheading{Boundary Edge Split}
Splitting a boundary edge does not alter the geometric shape of the boundary surface and 
only refine the mesh connectivity. Since the boundary geometry remains
unchanged we can extract the affected local region and establish a canonical bijective
correspondence, identical to the treatment of interior operations.

\parheading{Boundary Vertex Smoothing}
Vertex smoothing preserves the mesh connectivity while modifying vertex
positions. In this case the set of tetrahedra in $\mathcal{P}^{\text{before}}$
and $\mathcal{P}^{\text{after}}$ have identical combinatorial structure and we
presume that when a vertex of a tetrahedron is moved the points in the
tetrahedron are all moved linearly. As such, we move each point in each
tetrahedron so that its coordinate with respect to that tetrahedron's
barycentric coordinates remains the same before and after the operation. 

\parheading{Boundary Edge Swap}
An edge swap decomposes into edge split followed by edge collapse. As the split introduces no distortion and we have to implement the collapse anyway we chose to implement the swap atlas as the composition of the split atlas and the collapse atlas.

\subsubsection{Boundary Edge Collapse}
The remaining atlas to describe is the \textbf{boundary edge collapse}, which
genuinely alters the boundary geometry and connectivity simultaneously.
The object is, therefore, to construct a reparametrization that embeds both
$\mathcal{P}^{\text{before}}$ and $\mathcal{P}^{\text{after}}$ into a consistent
geometric domain, thereby establishing a bijective correspondence between them.

\label{subsubsec:boundary-edge-collapse} Consider a mesh $\mathcal{M}_{\ell}$ at step
$\ell$ of the remeshing sequence, with boundary surface $\partial \mathcal{M}_{\ell}$.
Suppose we perform a boundary edge collapse on edge $(i,j)$, yielding the
updated mesh $\mathcal{M}_{\ell+1}$. Without loss of generality, we assume vertex
$i$ is collapsed toward vertex $j$.


\parheading{Local Patch Extraction}
Consider a boundary edge collapse operation that collapses vertex $i$ toward vertex $j$.
Let $\text{St}_{\ell}(v)$ denote the star of vertex $v$ in $\mathcal{M}_{\ell}$.
We define the pre-collapse local patch as
\begin{equation}
\mathcal{P}^{\text{before}} = \text{St}_{\ell}(i).
\end{equation}
The edge collapse removes every tetrahedron incident to edge $(i,j)$ and replaces vertex $i$ by $j$ in the remaining tetrahedra. The post-collapse local patch $\mathcal{P}^{\text{after}} \subset \mathcal{M}_{\ell+1}$ consists of exactly those surviving tetrahedra, i.e., the tetrahedra in $\text{St}_{\ell}(i)$ that are not incident to edge $(i,j)$, with vertex $i$ replaced by $j$.

\parheading{Coplanarity Constraint}
Similar to the 2D case for handling boundary edges, where we impose colinearity
constraints to ensure a consistent boundary curve, in 3D we adopt an analogous 
strategy: during reparametrization, we constrain the boundary triangles affected 
by the collapse to lie in a common plane. Specifically, we require that the 
portions of the local patch boundary lying on the global mesh boundary,
\begin{equation}
  \partial \mathcal{P}^{\text{before}}\cap \partial \mathcal{M}_{\ell}\quad \text{and}
  \quad \partial \mathcal{P}^{\text{after}}\cap \partial \mathcal{M}_{\ell+1},
\end{equation}
are reparametrized to lie in the same plane. Therefore, 
the collapse operation will not change the shape of the parametrization domain.
Our problem then reduces to finding a reparametrization that satisfies this coplanarity 
constraint and can accommodate both the pre-collapse connectivity $\mathcal{P}^{\text{before}}$ 
and the post-collapse connectivity $\mathcal{P}^{\text{after}}$.

\parheading{Reduction to Convex Polyhedron Construction}
As established above, our problem reduces to finding a common reparametrization 
domain that can accommodate both $\mathcal{P}^{\text{before}}$ and 
$\mathcal{P}^{\text{after}}$. By Lemma~\ref{lem:non-overlapping}, this problem 
further simplifies to constructing a \textbf{convex polyhedron} that can 
simultaneously embed the boundaries $\partial \mathcal{P}^{\text{before}}$ and 
$\partial \mathcal{P}^{\text{after}}$. Crucially, to satisfy the coplanarity 
constraint, we require that the portions
$\partial \mathcal{P}^{\text{before}}\cap \partial \mathcal{M}_{\ell}$ and $\partial \mathcal{P}^{\text{after}}\cap \partial \mathcal{M}_{\ell+1}$
are embedded onto a single face of this polyhedron. Specifically, the one-ring 
boundary vertices of vertex $i$, together with vertex $i$ itself, must all lie on 
the same polyhedral face on the constructed convex polyhedron.

\parheading{Combinatorial Formulation} We formulate this as a combinatorial problem. The boundary surface $\partial \mathcal{P}^{\text{before}}$ (equivalently, $\partial \mathcal{P}^{\text{after}}$) consists of two parts: (1) the $k$ boundary triangles that do not contain vertex $i$, denoted $f_0, \ldots, f_{k-1}$, and (2) the one-ring of vertex $i$ on $\partial\mathcal{M}_{\ell}$, which forms an additional combinatorial face $f_k$ (e.g., the face $[0,6,5,4,7,1]$ in~\cref{fig:tutte-embedding}). Our question becomes: \emph{Can the combinatorial graph $G$ formed by $\{f_0, \ldots, f_k\}$ be realized as the 1-skeleton of a convex polyhedron?}

The following classical result provides an affirmative answer:

\begin{theorem}[Steinitz's Theorem]
  \label{thm:steinitz} A graph $G$ is the 1-skeleton of a convex polyhedron if and
  only if $G$ is simple, planar, and 3-connected.
\end{theorem}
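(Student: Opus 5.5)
Both directions need proof; the necessity direction is elementary and the sufficiency direction is the real content.

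\emph{Necessity.} Let $P$ be a convex polyhedron with graph $G$.

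\begin{itemize}
\item \textbf{Simple:} an edge of $P$ is a segment determined by its two endpoints, so there are no loops or multiple edges.
\item \textbf{Planar:} pick a point just outside the interior of one face, centrally project $\partial P$ onto the plane of that face (a Schlegel diagram), and obtain a crossing-free drawing of $G$.
\item \textbf{3-connected:} take any two vertices $u,v$; we show $G-\{u,v\}$ is connected. Choose a linear functional $\ell$ that is constant on $u$ and $v$ and otherwise generic. Consider the vertices where $\ell$ attains its maximum and its minimum over $P$. From any remaining vertex, walk along edges that strictly increase $\ell$ (simplex-method monotone paths); this path reaches the top vertex and avoids $u,v$, since their $\ell$-value is exceeded strictly along the walk, or, if we start above them, the path stays above them. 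Symmetrically, walk along decreasing edges to the bottom vertex. Handle the case where the walk would have to pass through the level $\ell(u)$ by tilting $\ell$ slightly. Every remaining vertex thus connects to the top or bottom vertex, and the top and bottom vertices are joined within the graph restricted to either side of the level. Hence $G-\{u,v\}$ is connected.
\end{itemize}

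\emph{Sufficiency.} Here I would follow the Tutte/Maxwell--Cremona route, which is already invoked in the paper.

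\begin{enumerate}
\item \textbf{Tutte embedding.} Since $G$ is 3-connected and planar, its planar embedding is unique (Whitney), and its faces are exactly the induced non-separating cycles. Fix an outer face that is a triangle, or more generally pin one face to a convex polygon. Solve the Tutte barycentric system with positive edge weights. Tutte's theorem, stated earlier in the related work, gives a straight-line embedding with all faces convex.
\item \textbf{Lifting.} Apply Maxwell--Cremona: a planar straight-line drawing has a lift to a polyhedral terrain in $\mathbb{R}^3$ exactly when it admits an equilibrium stress, and the lift is convex when that stress is positive on the interior edges. The Tutte weights provide such a stress on interior edges, since each interior vertex is in equilibrium by construction; the outer edges receive the negative stress that balances the outer polygon. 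The lift then places every interior face on a plane, with convex dihedral folds along the positively stressed interior edges.
\item \textbf{Closing up.} The outer face is capped by its own plane. Finally, apply a projective transformation to make the result a bounded convex polytope whose graph is exactly $G$, with no coplanar adjacent faces, because the stresses are nonzero.
\end{enumerate}

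\emph{Main obstacle.} The delicate step is the outer-face stress in step 2 when the outer face is not a triangle: the boundary stresses must close up consistently. I would first reduce to a triangular outer face. Since $G$ and its dual are both 3-connected, and a planar graph or its dual contains a triangle, I could choose $G^*$ if necessary and use polar duality of polytopes to transfer the conclusion back to $G$. Alternatively, I would fall back on Steinitz's original induction via $\Delta$--$Y$ reductions to $K_4$, checking that each inverse step is realizable by cutting off a vertex or stacking over a face.
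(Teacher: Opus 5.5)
The paper does not prove this statement. It cites Steinitz's theorem as classical and uses only the sufficiency direction, constructively. Its construction follows the same route you propose: a Tutte embedding with a triangular outer face chosen among $f_0,\dots,f_{k-1}$, then Maxwell--Cremona lifting (Algorithms~\ref{alg:tutte-embedding} and~\ref{alg:maxwell-cremona}). Your sufficiency sketch is sound. Your Euler-formula reduction is also correct: every such $G$ has a triangular face or a vertex of degree three, so you can pass to $G^*$ and take the polar if needed, which covers the case the paper never meets. Two minor corrections there:
\begin{itemize}
\item You need a triangular \emph{face}, not merely a triangle, since a separating triangle cannot serve as the outer face.
\item No projective transformation is required. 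A convex function on the outer triangle is bounded above by the affine interpolant of its corner values. So the lifted terrain, capped by the plane through the three lifted corners, is already a bounded polytope, and strictness of the folds makes its facets exactly the lifted faces plus the cap.
\end{itemize}

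The genuine gap is in the necessity direction, at the 3-connectivity step. Your walks correctly send vertices above the level $c=\ell(u)=\ell(v)$ upward and those below it downward. But your final claim, that the top and bottom are ``joined within the graph restricted to either side of the level'', cannot hold: the top lies strictly above $c$ and the bottom strictly below. What you need is an edge crossing the level that avoids $u$ and $v$, and you give no reason one exists. ``Tilting $\ell$'' is not a fix, because it destroys $\ell(u)=\ell(v)$, which was the point of the choice. With the above/below assignment, no walk ever needs to cross the level anyway.

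The missing step is the one in Balinski's theorem:
\begin{itemize}
\item Rotate the plane $H=\{\ell=c\}$ about the line $uv$ so that it meets the interior of $P$ and contains no other vertex. This is possible because only finitely many positions in the pencil hit another vertex.
\item Then $P\cap H$ is a convex polygon with at least three vertices, and only $u$ and $v$ among them are vertices of $P$.
\item Any third vertex is a transversal crossing of some edge $xy$ of $P$ with $\ell(x)>c>\ell(y)$ and $x,y\notin\{u,v\}$. That edge joins the two parts of $G-\{u,v\}$.
\end{itemize}
Also, ``otherwise generic'' need not make the top vertex unique, since edges parallel to $uv$ keep equal $\ell$-values at both ends. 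So argue with the top and bottom \emph{faces} of $P$ for $\ell$: their graphs are connected, they avoid $u$ and $v$, and every other vertex has a strictly improving neighbor.
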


\parheading{Constructive Algorithm}
We follow the approach of~\cite{ribo2011small} to realize a convex polyhedra embedding of the graph $\mathcal{G}$. We first select a triangle from $\{f_0, \ldots, f_{k-1}\}$ as the outer boundary and compute a planar Tutte embedding~\citep{Tutte1963} (\cref{fig:tutte-embedding}). The Maxwell-Cremona lifting~(\cref{fig:maxwell-cremona-lifting}) then elevates this 2D embedding into a 3D convex polyhedron by assigning heights to each vertex. Note that a na\"ive approach---e.g., placing boundary vertices on one side and lifting interior vertices uniformly---does not in general produce a convex polyhedron, because the induced boundary edge stresses may violate the sign conditions required by the Maxwell-Cremona correspondence~\cite{ribo2011small}. The full Tutte embedding with careful outer-face selection ensures these conditions are satisfied. The detailed pseudocodes are provided in Algorithms~\ref{alg:tutte-embedding} and~\ref{alg:maxwell-cremona} in the Appendix.

\begin{figure}[t]
  \centering
  \begin{subfigure}
    [b]{0.43\linewidth}
    \centering
    \includegraphics[width=\linewidth]{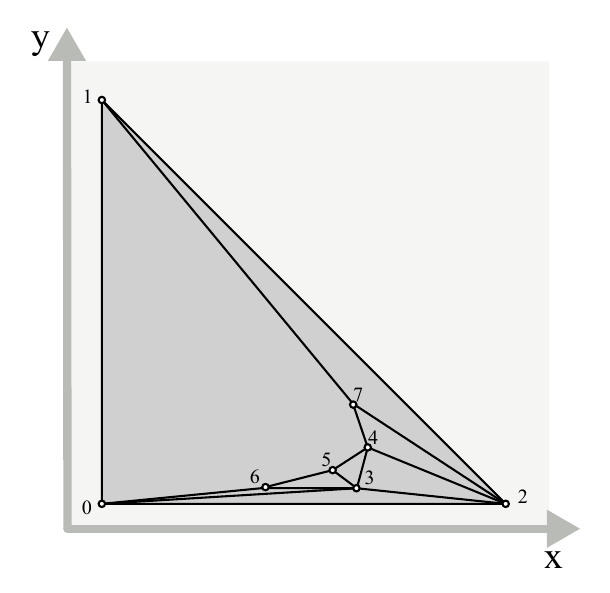}
    \caption{Tutte barycentric embedding}
    \label{fig:tutte-embedding}
  \end{subfigure}
  \hfill
  \begin{subfigure}
    [b]{0.53\linewidth}
    \centering
    \includegraphics[width=1\linewidth]{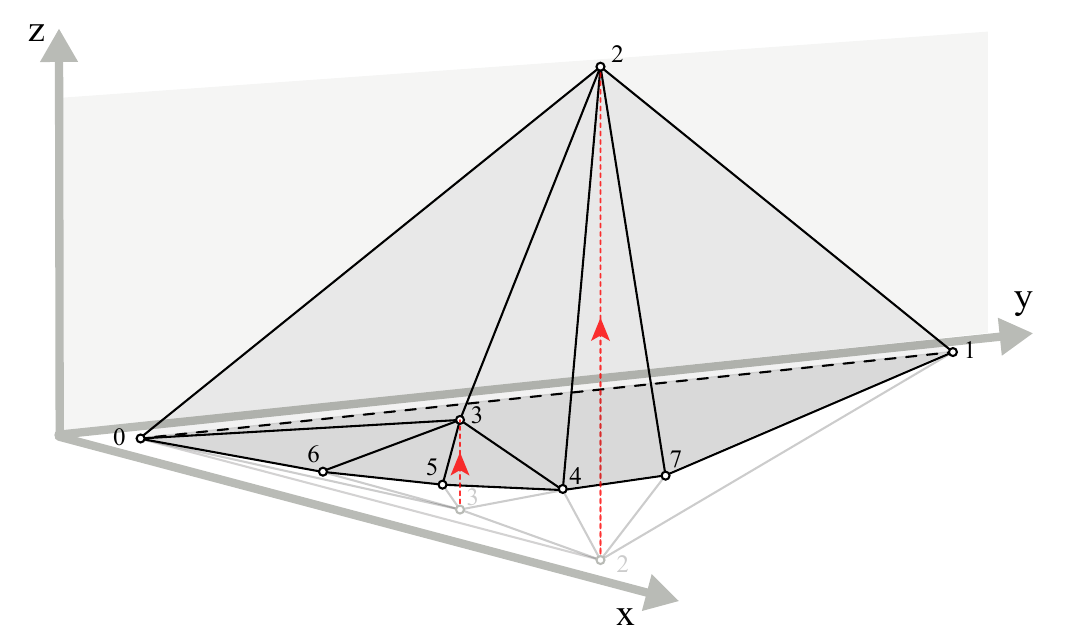}
    \caption{Maxwell-Cremona lifting}
    \label{fig:maxwell-cremona-lifting}
  \end{subfigure}
  \caption{Constructive algorithm for convex polyhedra embedding. (a) The Tutte embedding
  (Algorithm~\ref{alg:tutte-embedding}) produces a crossing-free 2D planar embedding
  with the outer triangle $f_{0}= [0, 1, 2]$ fixed at positions $(0,0)$, $(1,0)$,
  and $(0,1)$. Interior vertices (labeled 3--7) are positioned at their barycentric
  coordinates, yielding a valid planar layout. (b) The Maxwell-Cremona lifting (Algorithm~\ref{alg:maxwell-cremona})
  elevates this 2D embedding into a 3D convex polyhedron by assigning heights to
  each vertex.}
  \label{fig:convex-embedding-construction}
\end{figure}

\begin{figure}[t]
  \centering
  \includegraphics[width=0.98\linewidth]{
    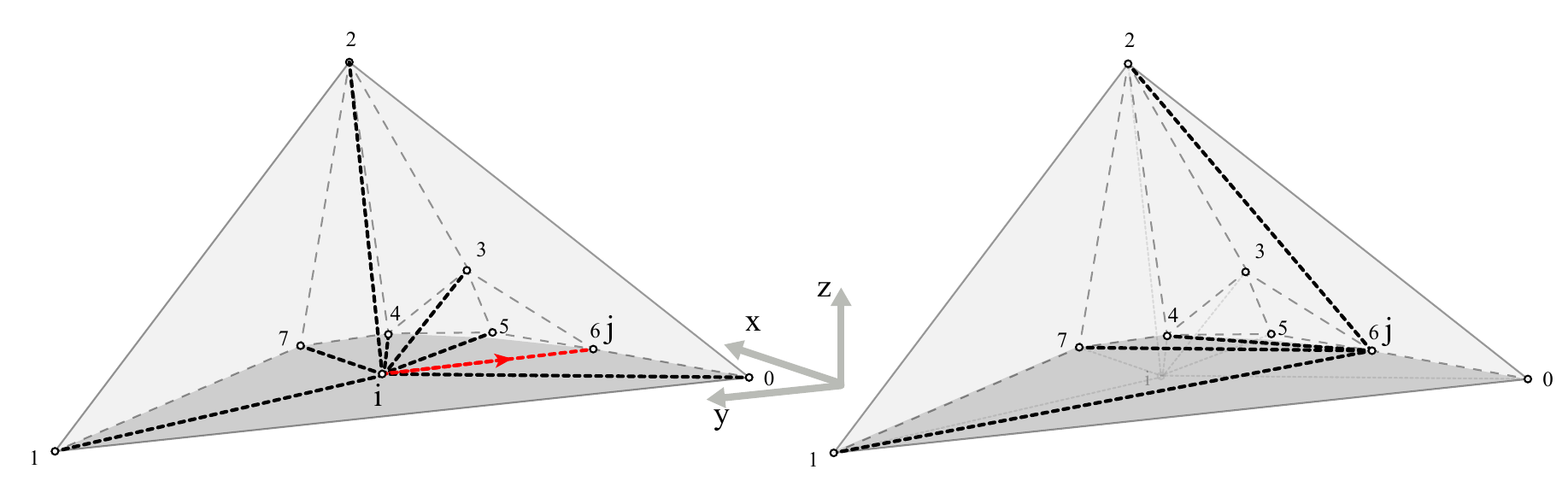
  }
  \caption{Initial valid embedding for both $\mathcal{P}^{\text{before}}$ and
  $\mathcal{P}^{\text{after}}$. After constructing the convex polyhedron $\mathcal{C}$
  via Algorithms~\ref{alg:tutte-embedding} and~\ref{alg:maxwell-cremona}, vertex
  $i$ is placed at the barycenter of face $f_{k}$ (the one-ring of $i$ on the boundary).
  By Lemma~\ref{lem:non-overlapping}, this configuration is valid (non-overlapping)
  for both the pre-collapse connectivity (with edge $(i,j)$) and the post-collapse
  connectivity (where $i$ has collapsed to $j$). This provides a valid starting point
  for joint optimization.}
  \label{fig:initial-embedding}
\end{figure}

\parheading{Initial Embedding and Joint Optimization}
A crucial requirement for our bijective optimization is that the initial embedding must be \emph{flip-free} (all elements have positive orientation). Our optimization employs an orientation-preserving line search that can \emph{maintain} this invariant but cannot \emph{recover} from inverted elements---the AMIPS energy diverges to $+\infty$ for degenerate or inverted tetrahedra. The convex polyhedron construction provides this essential flip-free initialization: having constructed the embedding of $\mathcal{G}$, we recover valid embeddings for both $\mathcal{P}^{\text{before}}$ and $\mathcal{P}^{\text{after}}$ by adding back their interior connectivity. By Lemma~\ref{lem:non-overlapping}, the convexity of the boundary polyhedron guarantees that these embeddings are non-overlapping and non-inverted. For $\mathcal{P}^{\text{before}}$, we additionally need to place vertex $i$ inside face $f_k$. A simple choice is the barycenter of $f_k$:
\[
  \mathbf{p}_{i}= \frac{1}{|f_{k}|}\sum_{v \in f_k}\mathbf{p}_{v},
\]
where $|f_{k}|$ denotes the number of vertices in face $f_{k}$ (see Fig.~\ref{fig:initial-embedding}).

To obtain a low-distortion embedding, we perform joint optimization over both $\mathcal{P}
^{\text{before}}$ and $\mathcal{P}^{\text{after}}$ simultaneously. We minimize
the AMIPS~\cite{Fu2015} distortion energy:
\[
  \min_{\{\mathbf{p}_v\}}\quad E_{\text{AMIPS}}(\mathcal{P}^{\text{before}}) + E_{\text{AMIPS}}
  (\mathcal{P}^{\text{after}}),
\]
subject to the \textbf{coplanarity constraint}: vertex $i$ remains on the plane
containing face $f_{k}$. Specifically, let $\mathbf{n}_{k}$ be the normal vector
of face $f_{k}$, and let $v_{0}\in f_{k}$ be any vertex on $f_{k}$. The
constraint is:
\[
  (\mathbf{p}_{i}- \mathbf{p}_{v_0}) \cdot \mathbf{n}_{k}= 0.
\]

The joint optimization framework, combined with the coplanarity constraint,
yields a bijective mapping between $\mathcal{P}^{\text{before}}$ and
$\mathcal{P}^{\text{after}}$ with minimal distortion, thus completing the construction
of the local atlas for the boundary edge collapse operation. Note that the
shared scaffold structure can naturally extend to this 3D setting as well.

\subsection{Tracking Using Local Atlases}
\label{subsec:local-atlas-tracking}

Having constructed local bijective atlases for each type of remeshing operation, we can now
describe how to leverage these atlases to track geometric entities—points, curves,
and surfaces—throughout the remeshing sequence. A key advantage of our bijective approach is
that the local maps enable \emph{bidirectional tracking}:
we can map entities forward from the original mesh $\mathcal{M}$ to the remeshed
result $\mathcal{M}'$, or backward from $\mathcal{M}'$ to $\mathcal{M}$. For clarity
of exposition, we only consider the backward tracking perspective throughout this
section, though the forward direction follows by symmetry.

\subsubsection{Point Tracking}
\label{subsubsec:point-tracking}

\parheading{Point Representation}
A point $p$ on a simplicial mesh is represented by its barycentric coordinates with respect to a simplex:
\begin{equation}
  \label{eq:point-representation}p = (t, \mathbf{b}),
\end{equation}
where $t$ is the ID of a triangle in a triangle mesh or a tetrahedron in a tetrahedral mesh
$\mathbf{b}= (b_{0}, b_{1}, \ldots, b_{d})$ are the barycentric coordinates with
respect to the vertices of $t$ ($d = 2$ for triangles, $d = 3$ for tetrahedra).

\parheading{Backward Tracking Algorithm}
Consider a remeshing operation $o_{i}: \mathcal{M}_{i-1}\to \mathcal{M}_{i}$
with its associated local atlas
$(U_{i}, \mathcal{P}_{i}^{\text{before}}, \mathcal{P}_{i}^{\text{after}})$, where $U_i$ is the shared parametric domain,
$\mathcal{P}_{i}^{\text{before}}\subset \mathcal{M}_{i-1}$ and $\mathcal{P}_{i}^{\text{after}}
\subset \mathcal{M}_{i}$ are the local patches affected by the operation, and
$U_{i}$ is the shared parametric domain into which both patches are embedded.

Now let us consider where to map a point $p' = (t', \mathbf{b}')$ on $\mathcal{M}_{i}$ to $\mathcal{M}_{i-1}$ given operation $o_i$..
If it lies outside of $\mathcal{P}_i^{\text{after}}$ then it will not move, so its backtracked position is the trivially the same, i.e $p = p'$.
On the other hand, we first then map $p'$ to the parametric domain using barycentric interpolation
\begin{equation}
  \label{eq:bary-interp}
  \mathbf{u}' = \sum_{j=0}^{d} b'_{j}\mathbf{u}_{v_j},
\end{equation}
where $\mathbf{u}_{v_j}\in U_{i}$ are the parametric coordinates of the vertices of $t'$.

  We then locate the simplex $t \in \mathcal{P}_{i}^{\text{before}}$ containing $\mathbf{u}
    '$ and compute its barycentric coordinates $\mathbf{b}$ in $t$ to obtain the result $p = (t, \mathbf{b})$.

%
%

The correctness follows from the fact that both
$\mathcal{P}_{i}^{\text{before}}$ and $\mathcal{P}_{i}^{\text{after}}$ share the
same geometric boundary and are embedded into the same parametric domain $U_{i}$,
ensuring that any point $\mathbf{u}' \in U_{i}$ has a well-defined interpretation
in both mesh states.

\begin{figure}[t]
  \centering
  \includegraphics[width=0.95\linewidth]{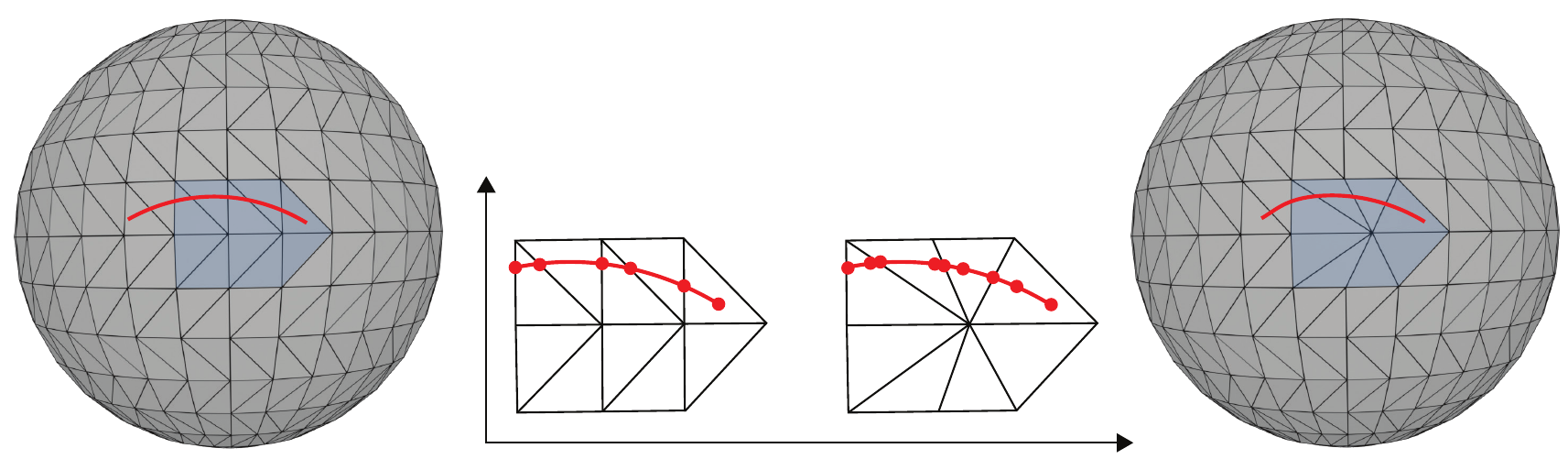}
  \caption{Curve tracking via local atlas. The portion of the curve (red) on the
  patch is first mapped to the local atlas (middle), where it is subdivided
  through arrangement with the new connectivity to generate intersection
  endpoints. This yields the tracked curve on the output mesh (right) with preserved
  topology.}
  \label{fig:curve_tracking}
\end{figure}

\subsubsection{Curve Tracking}
\label{subsubsec:curve-tracking}
\parheading{Curve Representation}
The task of tracking piecewise-linear curves on $\mathcal{M}'$ is not as trivial as simply projecting the endpoints of those curves to $\mathcal{M}$ because a single linear segment might not lie on $\mathcal{M}$.
In general, when the linear segments lie on more than one curve their embedding in 3D ceases to unambiguously lie on a mesh.
As such, the inputs and outputs of our curve tracking procedure are piecewise linear curves are comprised of linear segments that each lie on a single simplex (triangle or tetrahedron), though multiple segments may exist in a single simplex.
A curve $C$ on a simplicial mesh is therefore represented as a sequence of segments:
\begin{equation}
  \label{eq:curve-segments}C = \{\text{seg}_{1}, \text{seg}_{2}, \ldots, \text{seg}
  _{k}\},
\end{equation}
where each segment $\text{seg}_{i}$ is defined within a single simplex:
\begin{equation}
  \label{eq:segment-definition}\text{seg}_{i}= (t_{i}, \mathbf{b}_{i}^{(1)}, \mathbf{b}
  _{i}^{(2)}),
\end{equation}
with $t_{i}$ being the simplex  and $\mathbf{b}_{i}^{(1)}
, \mathbf{b}_{i}^{(2)}$ the barycentric coordinates of the two endpoints. We
require that consecutive segments are connected: the second endpoint of $\text{seg}
_{i}$ coincides with the first endpoint of $\text{seg}_{i+1}$, i.e.,
$(t_{i}, \mathbf{b}_{i}^{(2)}) \equiv (t_{i+1}, \mathbf{b}_{i+1}^{(1)})$ as points on
the mesh.
We use $\equiv$ because when segments end on simplex boundaries there are multiple triangles (respectively, tetrahedra) for which to construct barycentric coordinates from and we only care that the segment endpoints hold equivalent barycentric coordinates.

\parheading{Backward Tracking Problem}
Given a curve $C' = \{\text{seg}'_{1}, \ldots, \text{seg}'_{m}\}$ on $\mathcal{M}_{i}$ and a local atlas $(U_{i}, \mathcal{P}_{i}^{\text{before}}, \mathcal{P}_{i}^{\text{after}})$, we seek to compute the pre-image curve
$C = \{\text{seg}_{1}, \ldots, \text{seg}_{n}\}$ on $\mathcal{M}_{i-1}$. Since the
connectivity changes within the local patch, a segment in $C'$ that lies
entirely within a single simplex in $\mathcal{P}_{i}^{\text{after}}$ may
intersect multiple simplices in $\mathcal{P}_{i}^{\text{before}}$ due to the
different mesh topology.
As such, a single segment in $C'$ may result in multiple segments in $C$ to guarantee that each segment of $C$ lies in a single simplex in $\mathcal{M}$.

\parheading{Segment Arrangement Algorithm}
We perform our segment arrangement in the joint parametric space, so for a given segment $(t', \mathbf{b}'^{(1)}, \mathbf{b}'^{(2)})$ with
$t' \in \mathcal{P}_{i}^{\text{after}}$, we map its endpoints to the parametric domain via~\eqref{eq:bary-interp}, yielding parametric positions $\mathbf{u}^{(1)}$ and $\mathbf{u}^{(2)}$.

Our goal is to compute the arrangement of the segment $[\mathbf{u}^{(1)}, \mathbf{u}
^{(2)}]$ with respect to $\mathcal{P}_{i}^{\text{before}}$ in the joint parametric space $U_i$.
To do this employ a ray-marching: starting from $\mathbf{u}^{(1)}$, we trace along
direction $\mathbf{k}= \mathbf{u}^{(2)}- \mathbf{u}^{(1)}$ until reaching $\mathbf{u}
^{(2)}$, computing intersections with simplex boundaries (edges for 2D, faces
for 3D) in $\mathcal{P}_{i}^{\text{before}}$ at each step.
These intersection points will all lie on simplex boundaries, yielding sub-segments $\{\widetilde{\text{seg}}
_{1}, \ldots, \widetilde{\text{seg}}_{\ell}\}$ where each lies entirely within a
single simplex of $\mathcal{P}_{i}^{\text{before}}$ (Figure~\ref{fig:curve_tracking}).

\parheading{Intersection Candidate Selection}
At each step, given the current position $\mathbf{V}$ in the parametric domain and
direction $\mathbf{k}$, we determine candidate facets (edges for 2D, faces for
3D) based on the location of $\mathbf{V}$ (Figure~\ref{fig:candidate-selection}).

\begin{figure}[t]
  \centering
  \includegraphics[width=0.85\linewidth]{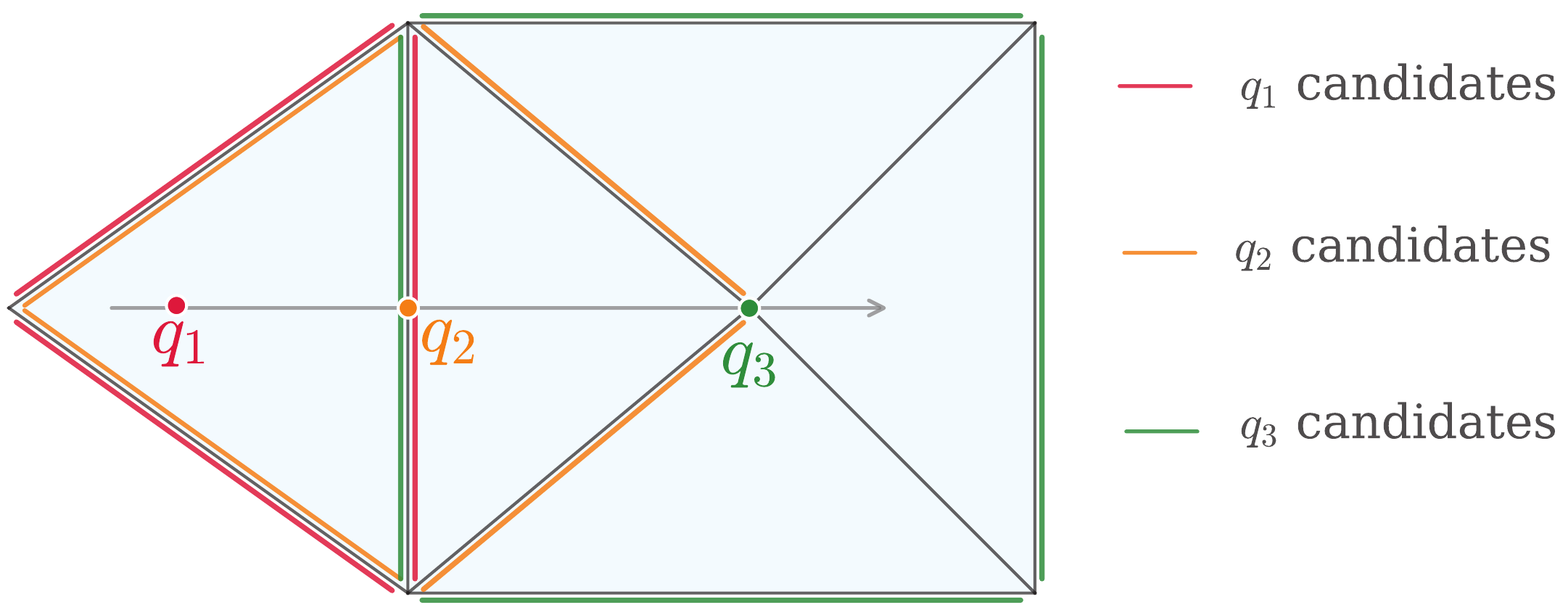}
  \caption{Candidate facet selection during curve tracking: Given
  three query points $q_{1}$, $q_{2}$, $q_{3}$ along a ray, we select candidate
  edges (highlighted in color) based on each point's location. Point $q_{1}$ lies
  in the interior of a triangle, so all three edges are candidates. Point $q_{2}$
  lies on an edge, so candidates come from the link of that edge (opposite edges
  from incident triangles). Point $q_{3}$ lies at a vertex, so candidates are
  edges opposite to that vertex in all incident triangles. The algorithm selects
  the intersection with the smallest positive parameter $t$ to advance to the next
  point.}
  \label{fig:candidate-selection}
\end{figure}

If $\mathbf{V}$ lies strictly inside a simplex $t$, the candidates are all
$(d-1)$-dimensional facets of $t$: three edges for triangles, four faces for
tetrahedra.

If $\mathbf{V}$ lies on a lower-dimensional sub-simplex $\sigma$ (a vertex, an edge for 2D/3D, or a face for 3D), we collect
candidates from the link~\citep{Munkres_2018} of $\sigma$, i.e all $(d-1)$-dimensional facets
from simplices incident to $\sigma$ excluding those facets that contain $\sigma$
itself.

\begin{itemize}
\item
\textbf{In 2D:} if $\sigma$ is an edge $(i,j)$, then $\text{link}((i,j))$ includes the
opposite edges from all triangles sharing $(i,j)$. If $\sigma$ is a vertex
$v$, then $\text{link}(v)$ includes all edges opposite to $v$ in triangles containing
$v$.
\item
\textbf{In 3D:} if $\sigma$ is a face, then $\text{link}(\sigma)$ includes the opposite
faces from tetrahedra sharing that face. If $\sigma$ is an edge, then $\text{link}
(\sigma)$ includes all faces from incident tetrahedra that do not contain the
edge. If $\sigma$ is a vertex $v$, then $\text{link}(v)$ includes all faces opposite
to $v$ in tetrahedra containing $v$.
\end{itemize}

\parheading{Ray-Facet Intersection}
For each candidate facet $f$, we compute the ray-facet intersection (ray-edge
for 2D, ray-triangle for 3D) and select the one with smallest positive parameter
$t > 0$ as the next intersection point. The detailed formulation is provided in Appendix~\ref{appendix:ray-intersection}.
This process repeats until reaching $\mathbf{u}^{(2)}$, producing the sub-segments
$\{\widetilde{\text{seg}}_{1}, \ldots, \widetilde{\text{seg}}_{\ell}\}$ in $\mathcal{P}
_{i}^{\text{before}}$.

\subsubsection{Topology-Preserving Multi-Curve Tracking}
\label{subsubsec:topology-preserving-curve-tracking}

In applications such as tracking UV seams or material boundaries on triangle meshes,
we will need to track multiple curves or loops simultaneously. In such scenarios,
the primary requirement is preserving the topological relationships among
curves.
Specifically, this means maintaining their intersection and overlap patterns in the correct
order.
Although our maps are bijective, running our procedure using floating point arithmetic occasionally changes the topological relationships between curves.
On the other hand, exact rational arithmetic becomes computationally prohibitive.
To balance these two extremes we take the approach of defining invariants that can help us determine whether a floating point computation introduced topological changes, and if a topological change does occur we re-run our tracking algorithm using exact rational predicates.

\parheading{Intersection and Overlap Events}
Consider a set of curves $\mathcal{C}= \{C_{1}, \ldots, C_{\ell}\}$ to be
tracked. As we traverse a curve $C_{i}$ from start to end, it encounters a
sequence of events where it interacts with other curves (including itself for self-intersections).
We classify these events into two types:

\begin{itemize}
  \item \textbf{Intersection}: Two segments cross at a single point (either at
    shared endpoints or at a transversal intersection within a common simplex).

  \item \textbf{Overlap}: Two segments coincide along a positive-length interval
    (not just a single point).
\end{itemize}

For each curve $C_{i}$, we record the ordered sequence of events:
\begin{equation}
  \mathcal{E}(C_{i}) = \langle e_{1}, e_{2}, \ldots, e_{m}\rangle,
\end{equation}
where each event $e_{k}$ specifies:
\begin{itemize}
  \item The type (intersection or overlap),

  \item The interacting curve index $j$ (where $j$ may equal $i$ for self-interactions),

  \item The parametric location(s) along $C_{i}$ where the event occurs.
\end{itemize}

\parheading{Topology Preservation Requirement}
Our goal is to ensure that for each curve $C_{i}\in \mathcal{C}$ and its tracked
result $C_{i}'$, the event sequence is preserved:
\begin{equation}
  \mathcal{E}(C_{i}) \cong \mathcal{E}(C_{i}'),
\end{equation}
where $\cong$ denotes combinatorial equivalence: the sequences have the same
length, and corresponding events have the same type and involve the same curve pairs.

Since intersections and overlaps are defined intrinsically within each simplex,
we verify topology preservation locally for each operation. After tracking all curves
through the local atlas $(U_{i}, \mathcal{P}_{i}^{\text{before}}, \mathcal{P}_{i}
^{\text{after}})$, we compute the event sequences in
$\mathcal{P}_{i}^{\text{before}}$ and verify that they match the original
sequences in $\mathcal{P}_{i}^{\text{after}}$.

\parheading{Implementation Strategy}
We track all curves within a local patch simultaneously and extract
their event sequences by:
\begin{enumerate}
  \item For each pair of curves $(C_{i}, C_{j})$ (including $i = j$), enumerate
    all segment pairs and classify their interactions.

  \item Sort events along each curve according to their parametric positions.

  \item Compare the resulting event sequences before and after tracking.
\end{enumerate}

If the event sequences do not match ,typically due to numerical errors in double-precision
arithmetic, we re-run tracking using exact rational arithmetic to ensure
correctness.

\parheading{Curve Simplification via Edge Collapse}
To improve efficiency when tracking many curves over long remeshing sequences, we
simplify curves by collapsing short segments. Given a segment
$\text{seg}= (t, \mathbf{b}^{(1)}, \mathbf{b}^{(2)})$ with parametric length
below a threshold $\epsilon$, we consider collapsing it by merging its endpoints.

Crucially, edge collapse must preserve the event sequences $\mathcal{E}(C_{i})$
for all curves. Before collapsing a segment in curve $C_{i}$, we verify that
  the local event sequence near the segment (including events involving
    $C_{i}$ and other curves) remains unchanged after collapse and  that
  no new intersections or overlaps are created, and no existing ones are
    destroyed.

This verification can be performed by computing event sequences in a local
neighborhood before and after each potential collapse. By iteratively simplifying
curves while preserving their event sequences, we significantly reduce
computational cost without compromising topological correctness.

\subsubsection{Surface Tracking on Tetrahedral Meshes}
\label{subsubsec:surface-tracking-tet}
For tetrahedral meshes, we extend the tracking framework to handle surfaces, which
are essential for applications such as tracking material interfaces or segmentation
boundaries in volumetric simulations.

\parheading{Surface Representation}
A surface $S$ on a tetrahedral mesh is represented as a triangle mesh embedded
in the volume, where each vertex is a point on the tetrahedral mesh:
\begin{equation}
  p = (t, \mathbf{b}),
\end{equation}
with $t$ being a simplex(tetrahedron) ID and $\mathbf{b}= (b_{0}, b_{1}, b_{2}, b
_{3})$ the barycentric coordinates. The surface consists of triangular faces:
\begin{equation}
  S = \{f_{1}, f_{2}, \ldots, f_{k}\},
\end{equation}
where each face $f_{i}= (p_{i}^{(1)}, p_{i}^{(2)}, p_{i}^{(3)})$ is defined by three
vertices. We require that all three vertices of each face lie within the same
tetrahedron, including its boundary. That is, if $p_{i}^{(j)}= (t_{i}^{(j)}, \mathbf{b}
_{i}^{(j)})$ for $j \in \{1, 2, 3\}$, then either
$t_{i}^{(1)}= t_{i}^{(2)}= t_{i}^{(3)}$, or they share a common face, edge, or vertex
of the tetrahedral mesh.

\parheading{Surface Arrangement Algorithm}
Given a surface $S'$ on $\mathcal{M}_{i}$ and a local atlas
$(U_{i}, \mathcal{P}_{i}^{\text{before}}, \mathcal{P}_{i}^{\text{after}})$, we track
the portion of $S'$ that intersects the local patch $\mathcal{P}_{i}^{\text{after}}$.

For each face $f \in S'$ with $f \subset \mathcal{P}_{i}^{\text{after}}$, we:
\begin{enumerate}
  \item Map the three vertices of $f$ to the parametric domain $U_{i}$,
    obtaining a triangle $\tilde{f}$ in the parametric domain.

  \item Extract the boundary surface of $\mathcal{P}_{i}^{\text{before}}$ restricted
    to the region covered by $\tilde{f}$.

  \item Compute the arrangement of $\tilde{f}$ with the boundary triangles of $\mathcal{P}
    _{i}^{\text{before}}$ using the \texttt{autorefine\_triangle\_soup} function
    from CGAL~\cite{cgal:lty-pmp-25b}, which subdivides overlapping triangles into
    non-overlapping pieces and retriangulates the result.

  \item Convert the resulting triangles back to barycentric coordinates with respect
    to tetrahedra in $\mathcal{P}_{i}^{\text{before}}$.
\end{enumerate}

The CGAL arrangement procedure~\citep{cgal:eb-25b} guarantees that the output is a valid
triangulation where each triangle lies entirely within a single tetrahedron or
on its boundary. To ensure robustness, this computation can be performed using
exact rational arithmetic.

\parheading{Topology Preservation for Multiple Surfaces}
When tracking multiple surfaces $\mathcal{S}= \{S_{1}, \ldots, S_{m}\}$ simultaneously,
we preserve their topological relationships through an intersection structure
between intersecting surfaces.

Just like with curve tracking, we will guarantee topology by developing an invariant.
For surfaces, the relevant topological invariant is the number and connectivity
of intersection curves. If surfaces $S_{i}$ and $S_{j}$ intersect, their
intersection forms a set of curves (possibly multiple disjoint components). We define
the \emph{intersection graph} $\mathcal{G}(\mathcal{S})$ where:
\begin{itemize}
  \item Nodes represent connected components of pairwise surface intersections.

  \item Edges connect components that share endpoints or form junctions.
\end{itemize}

Our goal is to ensure that, for the tracked surfaces $\mathcal{S}'$, the intersection
graph remains combinatorially equivalent:
$\mathcal{G}(\mathcal{S}) \cong \mathcal{G}(\mathcal{S}')$.

\parheading{Surface Simplification via Edge Collapse}
Similar to curve tracking, we simplify surface representations by collapsing
short edges to improve efficiency. Given an edge $e = (p^{(1)}, p^{(2)})$ in a surface,
we collapse it if its length is below a threshold $\epsilon$, if the collapse keeps all incident triangles remain within a single tetrahedron or its boundary, and if the intersection graph $\mathcal{G}(\mathcal{S})$, is preserved.

To verify topology preservation, we compute the number of connected components in
the intersection curves before and after collapse. If the component count
changes, we reject the collapse. By iteratively applying valid collapses, we maintain
a simplified surface representation throughout the tracking process without compromising
topological correctness.

\section{Results}
\label{sec:Results}

\subsection{Texture Transfer}

A practical application of our point tracking framework is texture transfer
across remeshed models (\Cref{fig:texture_transfer_spot,fig:texture_transfer_ogre}).
Given an input mesh with an existing UV parameterization and texture, our goal
is to generate equivalent textures for remeshed versions of the model that may
use entirely different parameterizations.

The key insight is that our bijective mapping decouples remeshing from UV
maintenance. Traditional approaches must carefully preserve UV coordinates throughout
remeshing operations, which becomes increasingly difficult when the mesh
contains complex UV seams or multiple charts\cite{sander2001texture}. In
contrast, our method allows the remeshed model to be reparameterized independently
of any choice in standard parameterization algorithm. The texture is then
transferred by composing the new parameterization with our tracked bijective
correspondence: we evalute each texel in the new texture domain by first mapping
to the remeshed surface, tracking it backward to the original mesh, and finally sample
the original texture at the corresponding UV coordinates.

\Cref{fig:texture_transfer_spot,fig:texture_transfer_ogre} demonstrate texture transfer
through both mesh decimation and refinement operations. The center images show
the original models with their input UV parameterizations and textures. The top row
displays the remeshed results after decimation (left) and refinement (right),
both rendered with successfully transferred textures. The bottom row shows the UV
layouts and corresponding texture maps generated for each remeshed model.
Despite significant changes in mesh resolution and connectivity our bijective
framework ensures accurate texture correspondence without introducing artifacts
such as seam tearing or texture deviation.

\begin{figure}[htb]
    \centering
    \includegraphics[width=0.99\linewidth]{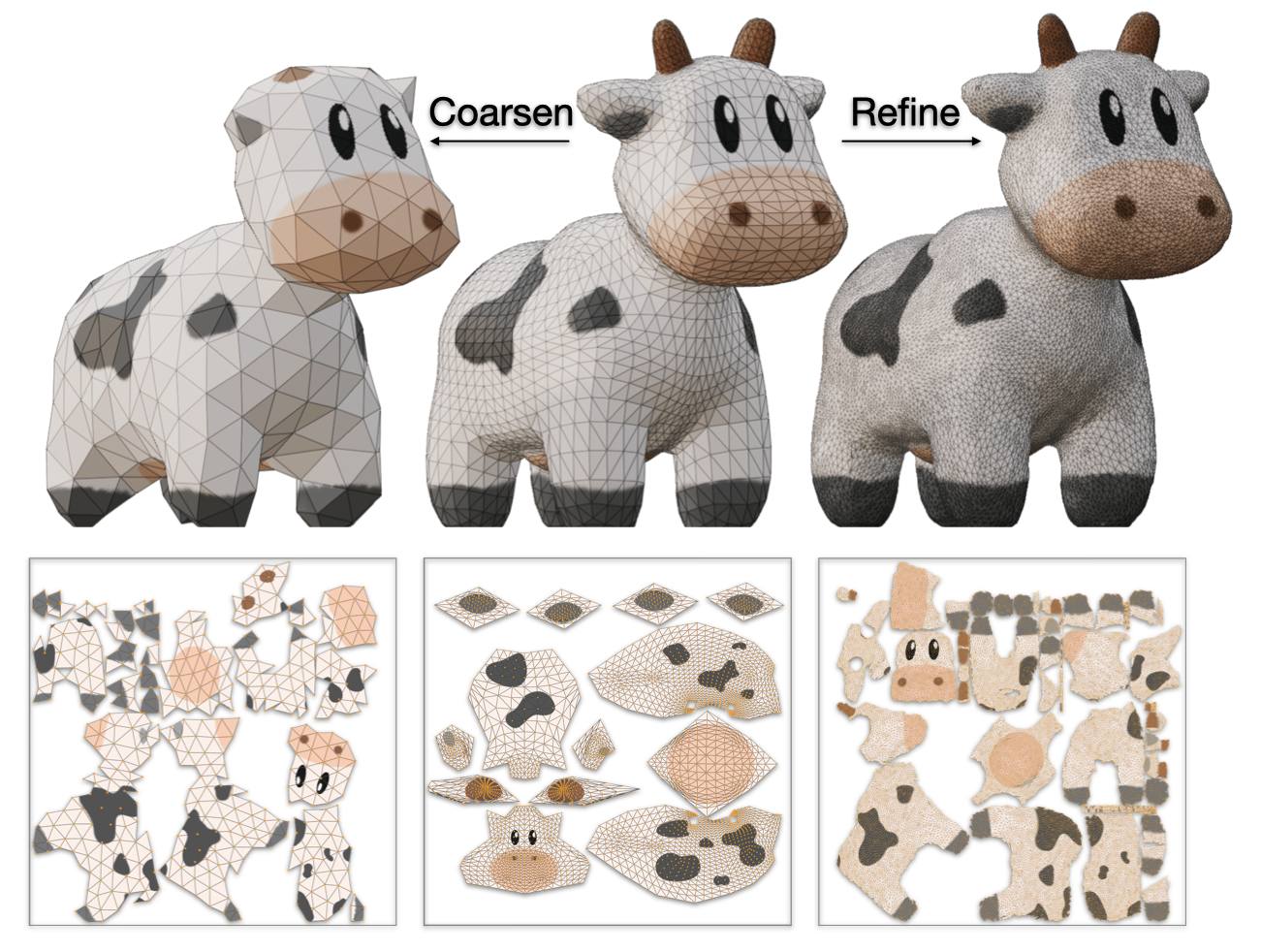}
    \caption{Texture transfer through mesh decimation and refinement. \textbf{Center}:
    the input spotted animal model\cite{crane2013robust} with its original UV
    parameterization and texture. \textbf{Top row}: remeshed results after decimation
    (left) and refinement (right), rendered with the transferred texture. \textbf{Bottom
    row}: the UV layouts and generated textures corresponding to fresh parameterizations
    of each remeshed model. Our bijective mapping enables texture transfer by
    composing the new parameterization with the tracked correspondence, eliminating
    the need to maintain UV coordinates during remeshing operations.}
    \label{fig:texture_transfer_spot}
\end{figure}

\begin{figure}[htb]
    \centering
    \includegraphics[width=0.99\linewidth]{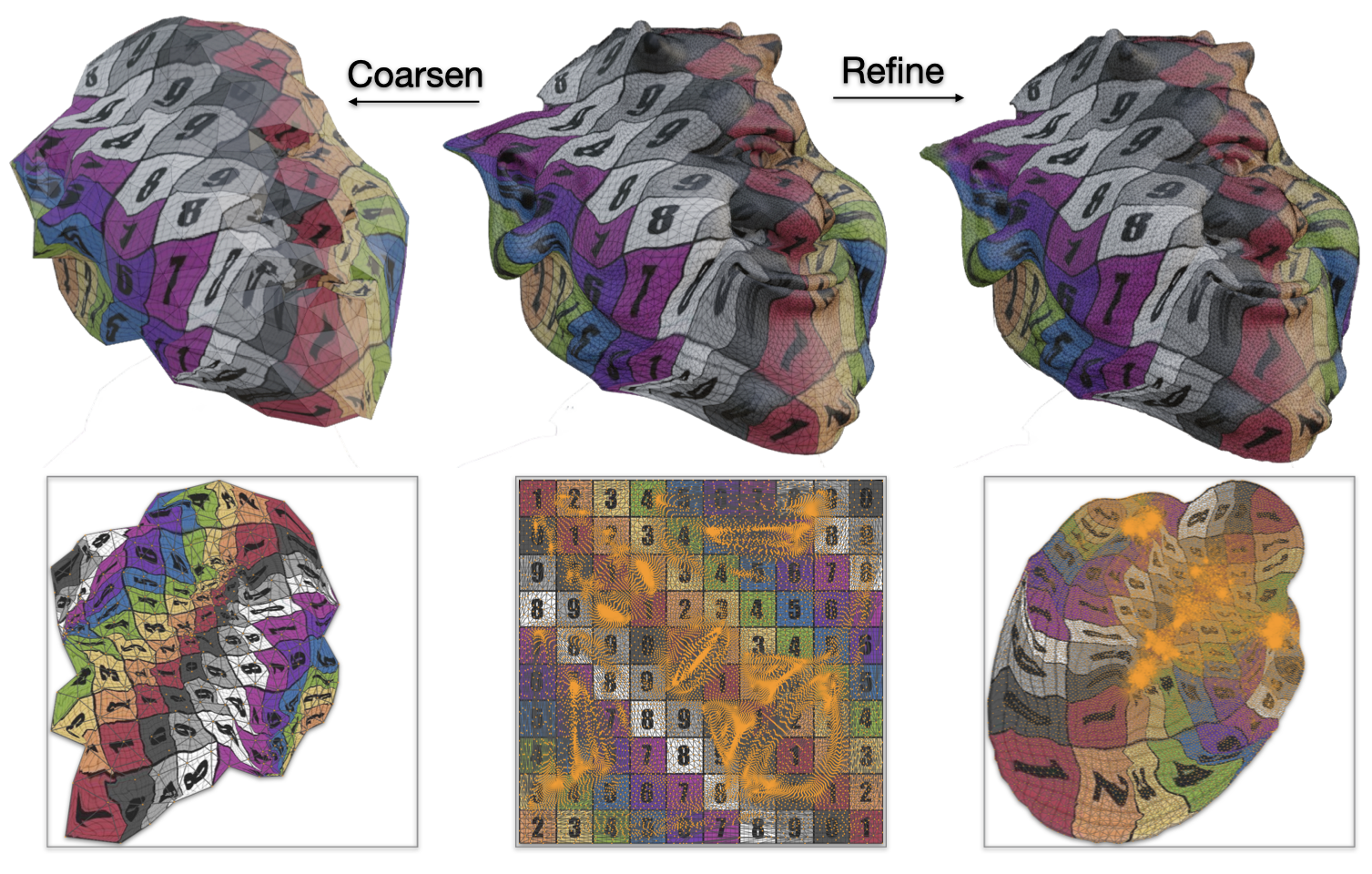}
    \caption{Texture transfer on the Ogre model with checkerboard pattern. The regular
    grid structure of the checkerboard provides a visual metric for evaluating mapping
    quality and distortion. Despite extensive geometric and connectivity changes
    from decimation (top left) and refinement (top right), the checkerboard pattern
    remains smooth and continuous, demonstrating effective distortion control
    through our bijective framework.}
    \label{fig:texture_transfer_ogre}
\end{figure}

\subsection{Curve Tracking on Triangle Meshes.}
We demonstrate our curve tracking algorithm on triangle meshes undergoing
remeshing operations (Figure~\ref{fig:curve_tracking_result}). Given an input mesh
$\mathcal{M}_{\text{input}}$, we generate a collection of curves by intersecting
the surface with axis-aligned planes parallel to the $xy$, $yz$, and $xz$
coordinate planes. These curves form a complex network with several intersection
points where curves from different plane families cross each other.

We then perform forward tracking through the remeshing sequence to obtain the corresponding
curves on $\mathcal{M}_{\text{output}}$. As shown in Figure~\ref{fig:curve_tracking_result},
the tracked curves faithfully preserve their intersection topology: the
combinatorial pattern of curve crossings remains identical between input and
output, with no spurious intersections introduced and no existing intersections lost.
This topological guarantee is achieved through our topology-preserving multi-curve
tracking algorithm (Section~\ref{subsubsec:topology-preserving-curve-tracking}),
which maintains ordered sequences of intersection events throughout the tracking
process. Critically, during the segment arrangement phase, we employ exact
arithmetic with coordinates represented as rational numbers to evaluate
geometric predicates, ensuring numerical robustness and preventing topological inconsistencies
that could arise from floating-point rounding errors.

\begin{figure}[hbt]
    \centering
    \includegraphics[width=0.99\linewidth]{
        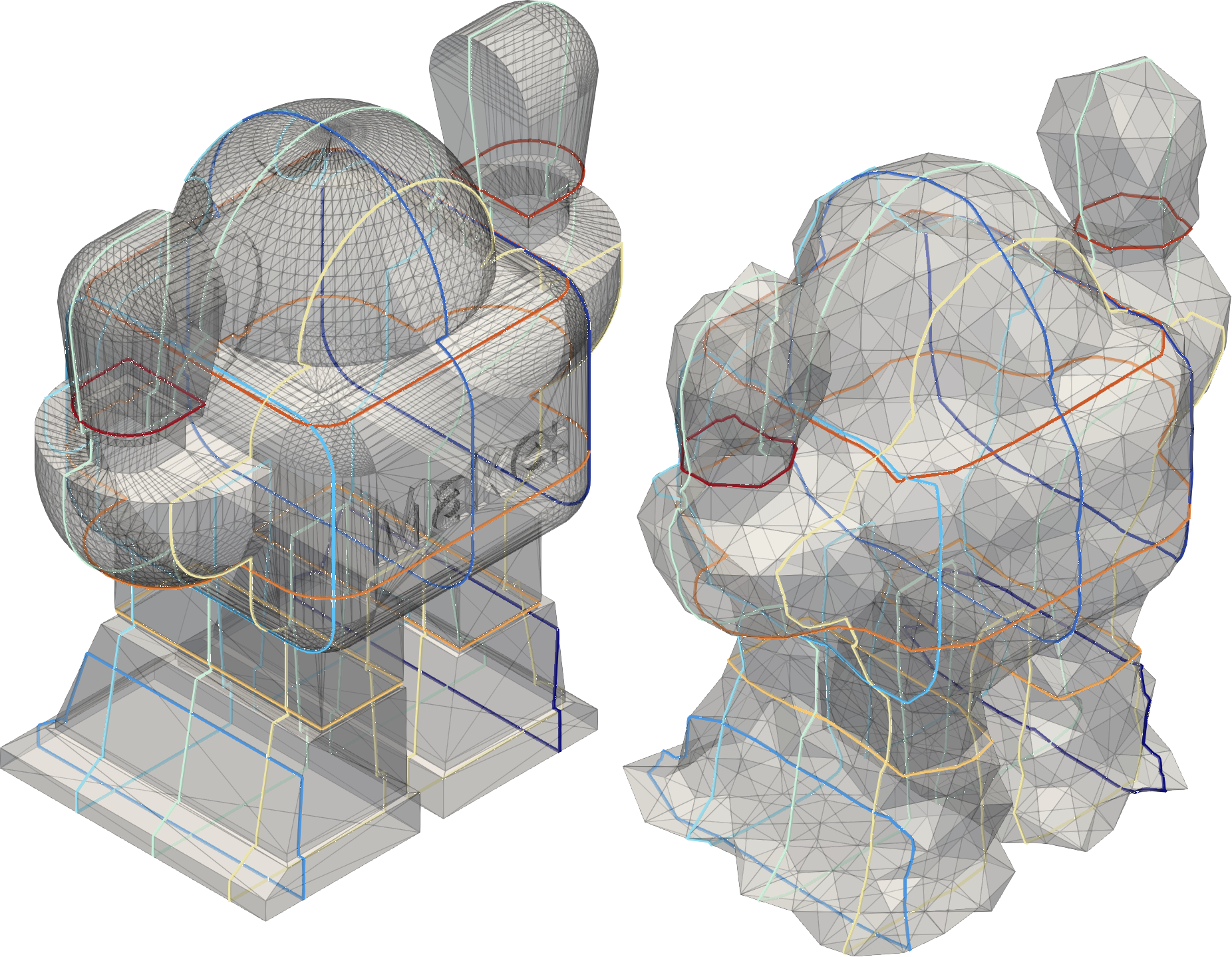
    }
    \caption{Curve tracking on a triangle mesh. \textbf{Left}: curves on the input
    mesh $\mathcal{M}_{\text{input}}$, obtained as intersections of the surface
    with axis-aligned planes (parallel to the $xy$, $yz$, and $xz$ planes), shown
    in distinct colors. \textbf{Right}: the forward-tracked curves on the output
    mesh $\mathcal{M}_{\text{output}}$ after isotropic remeshing. Our method preserves
    the intersection topology of the curves throughout the remeshing sequence—curves
    that intersect on the input mesh maintain their intersection relationships on
    the output mesh.}
    \label{fig:curve_tracking_result}
\end{figure}

We evaluated our curve tracking algorithm on 5,139 manifold triangle meshes from
the Thingi10K dataset~\cite{Zhou2016}. For each model, we performed isotropic
remeshing and tracked axis-aligned planar curves through the operation sequence.
Of these models, 4,998 (97.3\%) completed successfully with topology preservation
verified throughout. The remaining 141 models timed out after 12 hours, primarily
due to extensive operation counts exceeding 700,000 per model. These timeouts
represent computational resource limits rather than algorithmic failures—given
sufficient computation time, they would yield topologically correct results identical
to the successful cases. Furthermore, as we note in \Cref{sec:concl}, the
majority of this cost is due to computing each atlas as part of each topological
operation. In reality this could be performed afterwards and in parallel.

Figure~\ref{fig:stress_test} showcases stress test examples where meshes undergo
aggressive simplification with dramatic geometric changes. Despite surface
features being substantially altered or eliminated, our method maintains topologically
correct curve tracking—intersection patterns and connectivity remain intact.
This demonstrates that our bijective framework preserves topological coherence
even under severe geometric deformations: as long as the remeshing operations are
topologically valid, our composed mappings produce correspondingly coherent
results.

\begin{figure*}[t]
    \centering
    \includegraphics[width=0.98\linewidth]{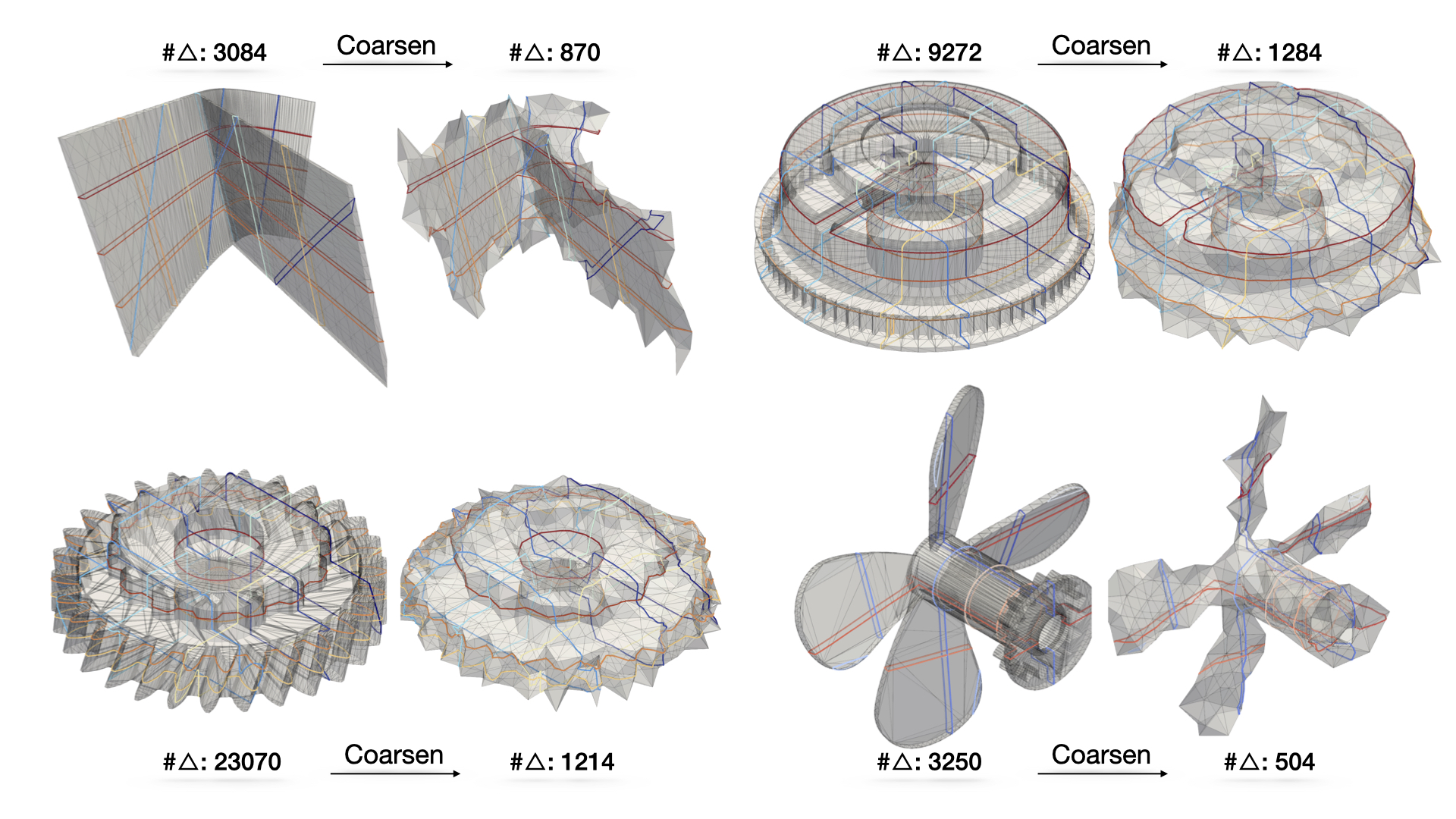}
    \caption{Stress test for curve tracking under extreme remeshing. Each pair shows
    input mesh (left) and aggressively simplified output (right) with tracked curves
    in color. Despite dramatic geometric changes, our method preserves topological
    correctness—curve intersections and connectivity remain intact throughout.
    Thingi10K models: \#636811 (top-left), \#1036656 (top-right), \#1312974 (bottom-left),
    \#1505135 (bottom-right).}
    \label{fig:stress_test}
\end{figure*}


\subsection{Surface Tracking on Tetrahedral Meshes}
We demonstrate the application of our surface tracking framework to medical
imaging data, specifically tracking organ segmentation boundaries through volumetric
mesh simplification (Figure~\ref{fig:ct_tracking_result}).

\begin{figure}[t]
    \centering
    \includegraphics[width=0.99\linewidth]{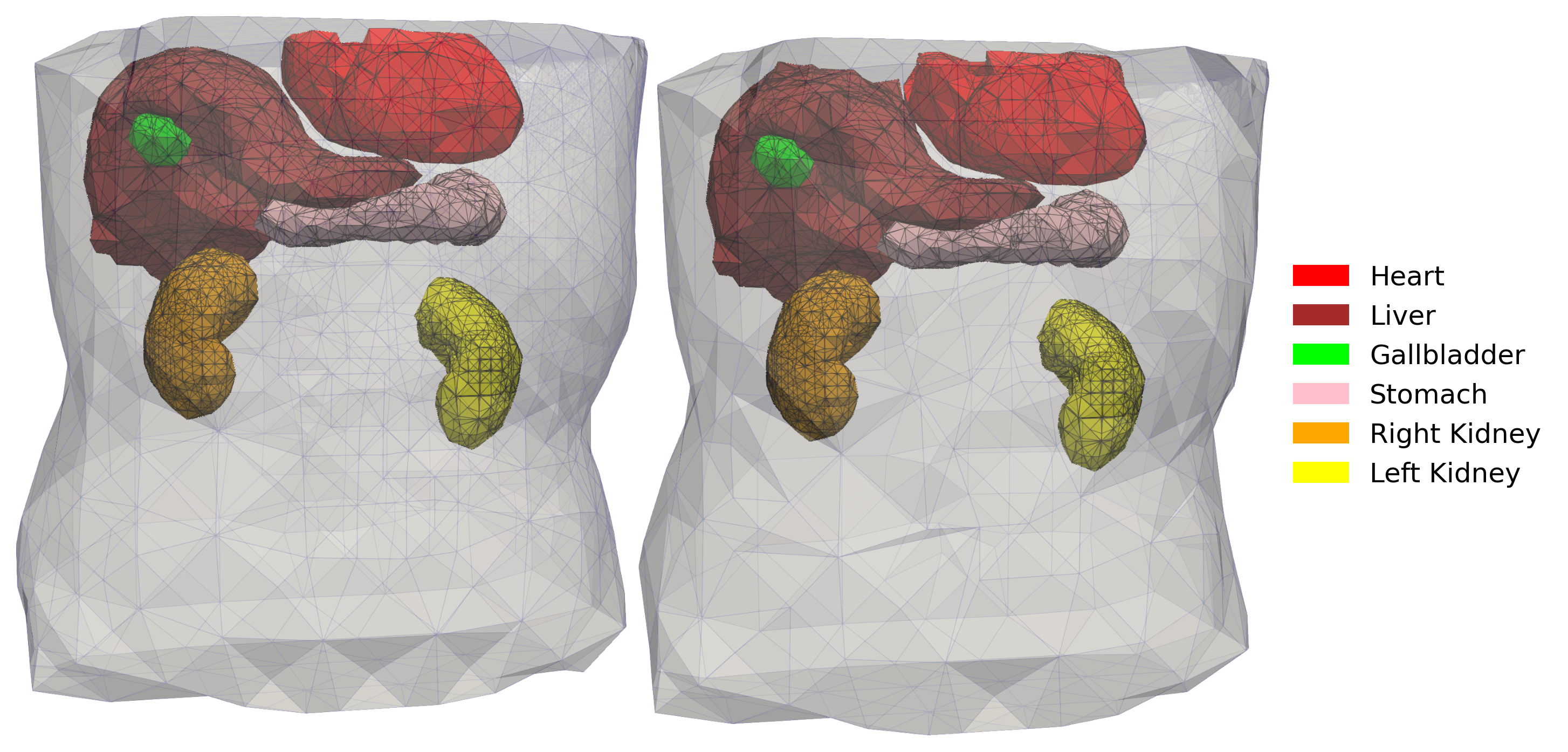}
    \caption{Surface tracking on a tetrahedral body mesh from CT scan data. We track
    multiple organ segmentation surfaces (heart, liver, gallbladder, stomach,
    and kidneys) through volumetric mesh simplification. \textbf{Left}: organ
    surfaces on the input tetrahedral mesh $\mathcal{M}_{\text{input}}$. \textbf{Right}:
    the tracked surfaces on the simplified output mesh $\mathcal{M}_{\text{output}}$.
    Our bijective framework maintains the topological relationships among organs
    throughout the remeshing process.}
    \label{fig:ct_tracking_result}
\end{figure}

Our input data is derived from the CTA Abdomen (Panoramix) sample dataset
provided in 3D Slicer. We obtain organ segmentations using TotalSegmentator~\cite{wasserthal2023totalsegmentator},
a deep learning-based tool for automatic whole-body CT segmentation. We extract the
surface triangulation of six major organs (heart, liver, gallbladder, stomach,
and kidneys) as separate surfaces to be tracked, with each surface embedded in
the volumetric mesh via barycentric coordinates.

We perform mesh simplification on the background body mesh while tracking the
organ surfaces through each local atlas. As shown in Figure~\ref{fig:ct_tracking_result},
the tracked organ surfaces maintain their geometric shapes and relative
positions after simplification. The topological relationships among organs are preserved:
surfaces that were previously disjoint remain non-intersecting throughout the
remeshing sequence.

To evaluate surface tracking on a broader range of geometric models, we perform large-scale
testing on tetrahedral meshes generated from the Thingi10K dataset~\cite{Zhou2016}
using TetWild~\cite{Hu:2018:TMW:3197517.3201353} (Figure~\ref{fig:teaser}). For
each input model, we first perform mesh simplification to obtain a coarsened output
mesh $\mathcal{M}_{\text{output}}$. We then sample axis-aligned planar surfaces
on this simplified mesh by intersecting it with planes parallel to the $xy$, $xz$,
and $yz$ coordinate planes.

We perform backward tracking to map these surfaces from
$\mathcal{M}_{\text{output}}$ to the original high-resolution mesh
$\mathcal{M}_{\text{input}}$. Critically, the intersection topology is preserved
throughout the tracking process: the intersection graph $\mathcal{G}(\mathcal{S})$
representing how surfaces intersect remains combinatorially equivalent. As shown
in Figure~\ref{fig:teaser}, surfaces that intersect on the simplified output mesh
maintain their intersection relationships when back-tracked to the input mesh,
with intersection curves faithfully reconstructed.

\section{Conclusions}
\label{sec:concl}

We have presented BijectiveRemesh, a robust framework for maintaining bijective mappings throughout complex remeshing sequences on both 2D triangle meshes and 3D tetrahedral meshes. Our approach guarantees global bijectivity through local atlas construction using two key innovations: shared scaffold structures for 2D operations and convex polyhedra embeddings for 3D boundary operations. These bijective mappings enable exact tracking of geometric entities—points, curves, and surfaces—with rigorous preservation of topological relationships, eliminating the artifacts common in projection-based transfer methods.

\textbf{Limitations and Future Work.} Constructing the bijective local atlases adds approximately $110\times$ overhead per operation compared to performing remeshing operations alone. 
This overhead reflects the fact that our prototype augments each local patch with the auxiliary triangulation and performs iterative energy minimization with inversion-preventing line searches. 
We believe a substantial proportion of this overhead can be reduced through parallelization. Our current prototype implementation processes operations serially—constructing each local atlas sequentially as operations are applied. In practice, the atlas construction for different operations is independent: we can first record the local patch information for all operations during remeshing, then construct their corresponding local atlases in parallel. Since each operation's atlas construction is self-contained, this parallel formulation would reduce total execution time.

We currently also depend on exact rational arithmetic for geometric predicates in our current tracking implementation for the sake of robustness. While we employ various optimizations including rounding to double and curve simplification, the computational overhead of performing long sequences of rational arithmetic remains substantial, making the tracking process can become prohibitively slow.

Our framework is designed for incremental, local-modification remeshing pipelines (edge splits, collapses, flips, and vertex smoothing). Global remeshing strategies—such as voxelization-based remeshing, Delaunay re-triangulation, or neural mesh extraction—do not decompose into sequences of local operations and are therefore outside the current scope. Extending bijective mapping maintenance to such non-incremental methods would require fundamentally different techniques and remains an open problem.

An important direction for future work is developing floating-point tracking algorithms that maintain topological guarantees without exact arithmetic. This would require novel geometric predicates and consistency checks that can tolerate numerical errors while still preventing topological corruption in the tracked curves and surfaces.


\bibliographystyle{ACM-Reference-Format}
\bibliography{99-biblio}

\appendix
\section{Appendix}
\label{appendix}

\subsection{Constructive Algorithms for Convex Polyhedra Embedding}
\label{appendix:convex-polyhedra}

The following two algorithms describe the constructive procedure for embedding a planar graph as a convex polyhedron, as referenced in Section~\ref{subsubsec:boundary-edge-collapse}. Algorithm~\ref{alg:tutte-embedding} computes a crossing-free planar embedding via Tutte's barycentric method, and Algorithm~\ref{alg:maxwell-cremona} lifts it into a 3D convex polyhedron via the Maxwell-Cremona correspondence.

\begin{algorithm}[H]
  \caption{Tutte Barycentric Embedding}
  \label{alg:tutte-embedding}
  \begin{algorithmic}
    [1] \Require Planar graph given by face list $F = \{f_{1}, f_{2}, \ldots\}$ and
    outer triangle $f_{0}= (v_{1}, v_{2}, v_{3})$ \Ensure 2D embedding $p: V \to
    \mathbb{R}^{2}$

    \Statex \textbf{Construct weighted Laplacian} \For{each undirected edge $(i,j)$}
    \State $\omega_{ij}\gets
    \begin{cases}
      1, & \text{if $(i,j)$ is an interior edge}, \\
      0, & \text{if }i, j \in f_{0}
    \end{cases}$ \EndFor \State Assemble the $n \times n$ weighted Laplacian: $L_{ii}
    = \sum_{j}\omega_{ij}$, $L_{ij}= -\omega_{ij}$ for $i \neq j$

    \Statex \textbf{Partition variables} \State Let $B$ be the three boundary
    vertex indices in $f_{0}$, and $I$ be the interior vertices \State Partition:
    $L =
    \begin{pmatrix}
      L_{BB} & L_{BI} \\
      L_{IB} & L_{II}
    \end{pmatrix}$

    \Statex \textbf{Fix boundary vertices} \State $p_{v_1}\gets (0, 0)$, $p_{v_2}
    \gets (1, 0)$, $p_{v_3}\gets (0, 1)$ \State $x_{B}\gets (0, 1, 0)^{\top}$, $y
    _{B}\gets (0, 0, 1)^{\top}$

    \Statex \textbf{Solve for interior vertices} \State Solve $L_{II}x_{I}= -L_{IB}
    x_{B}$ and $L_{II}y_{I}= -L_{IB}y_{B}$ \State \Return Crossing-free planar embedding
    $p: V \to \mathbb{R}^{2}$
  \end{algorithmic}
\end{algorithm}

\begin{algorithm}[H]
  \caption{Maxwell-Cremona Lifting to Convex Polyhedron}
  \label{alg:maxwell-cremona}
  \begin{algorithmic}
    [1] \Require Planar embedding $p: V \to \mathbb{R}^{2}$ from Algorithm~\ref{alg:tutte-embedding}
    \Ensure Convex polyhedra embedding $\{(p_{v}, z_{v}) \mid v \in V\} \subset \mathbb{R}
    ^{3}$

    \Statex \textbf{Assign equilibrium stress} \For{each interior edge $(i,j)$}
    \State $\omega_{ij}\gets 1$ \EndFor

    \Statex \textbf{Define plane per face} \Statex For each face $f_{k}$, seek plane
    $H_{k}: z = \langle (x, y), a_{k}\rangle + d_{k}$

    \Statex \textbf{Maxwell-Cremona propagation} \State Choose one interior face
    $f_{1}$ as base: $a_{1}\gets (0, 0)$, $d_{1}\gets 0$ \For{each pair of adjacent faces $f_{r}, f_{\ell}$ sharing edge $(i,j)$}
    \State where $f_{\ell}$ lies to the left of directed edge $i \to j$ \State
    $a_{\ell}\gets \omega_{ij}(p_{i}- p_{j})^{\perp}+ a_{r}$ \Comment{$(x, y)^{\perp}= (-y, x)$}
    \State $d_{\ell}\gets \omega_{ij}\langle p_{i}, (p_{j})^{\perp}\rangle + d_{r}$
    \EndFor

    \Statex \textbf{Compute vertex heights} \For{each vertex $v \in V$} \State
    Pick any incident face $f_{k}$ \State
    $z_{v}\gets \langle p_{v}, a_{k}\rangle + d_{k}$ \EndFor

    \State \Return Convex polyhedron $\{(p_{v}, z_{v})\}$ whose projection onto $x
    y$-plane is the Tutte embedding
  \end{algorithmic}
\end{algorithm}

\subsection{Non-overlapping via Convex Boundary Embedding}

\begin{lemma}[Non-overlapping via Convex Boundary Embedding]
\label{lem:non-overlapping}
Let $\mathcal{P}_{\text{before}}$ be the local patch of boundary tetrahedra incident to vertex $i$. The boundary of this patch consists of:
\begin{itemize}
    \item Face $f_k$: the one-ring of vertex $i$ on $\partial \mathcal{M}_{\ell}$ (containing vertex $i$),
    \item Faces $f_0, \ldots, f_{k-1}$: the $k$ adjacent boundary triangles.
\end{itemize}
Suppose the vertices of $\{f_0, \ldots, f_{k-1}, f_k\}$ (excluding $i$) are embedded to form a convex polyhedron $\mathcal{C}$ via Algorithms~\ref{alg:tutte-embedding} and~\ref{alg:maxwell-cremona}, where we select one triangle from $\{f_0, \ldots, f_{k-1}\}$ as the outer boundary for the Tutte embedding.

Then for any position of vertex $i$ on or inside face $f_k$, all tetrahedra in $\mathcal{P}_{\text{before}}$ are non-overlapping.
\end{lemma}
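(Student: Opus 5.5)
The plan is to show that the tetrahedra of $\mathcal{P}_{\text{before}}$, placed at the positions given by $\mathcal{C}$ and the chosen $\mathbf{p}_i$, tile $\mathcal{C}$ without overlap. First I pin down the combinatorics. Every tetrahedron of $\mathcal{P}_{\text{before}}=\text{St}_\ell(i)$ has the form $[i,a,b,c]$, where $[a,b,c]$ is a triangle of the link of $i$. The link of a boundary vertex of a manifold tetrahedral mesh is a triangulated disk. Its boundary cycle is the one-ring of $i$ on $\partial\mathcal{M}_\ell$, that is, the vertices of $f_k$, and its triangles are exactly $f_0,\dots,f_{k-1}$. So $\partial\mathcal{P}_{\text{before}}$ is the sphere made of the disk $\{f_0,\dots,f_{k-1}\}$ glued along its boundary cycle to the cone over that cycle with apex $i$. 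After the embedding, the cone is the fan of $f_k$ triangulated from $i$.

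Next I use Theorem~\ref{thm:steinitz} together with Algorithms~\ref{alg:tutte-embedding} and~\ref{alg:maxwell-cremona}. Each $f_j$ with $j<k$ becomes a facet of the convex polyhedron $\mathcal{C}$, and the cycle $f_k$ becomes a single planar convex facet $F_k$. This requires $G$ to be simple, planar and 3-connected, which I would take as given with the combinatorial formulation. Now fix $\mathbf{p}_i\in F_k$, and suppose first that $\mathbf{p}_i$ lies in the relative interior. Each tetrahedron $[\mathbf{p}_i,a,b,c]$ is then the convex hull of the apex with the facet $[a,b,c]$, and it is nondegenerate: $[a,b,c]$ lies on a supporting plane different from that of $F_k$, so $\mathbf{p}_i$ is not on the plane of $[a,b,c]$.

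The key step is a star-shapedness argument. $\mathcal{C}$ is convex and $\mathbf{p}_i\in\partial\mathcal{C}$, so every ray from $\mathbf{p}_i$ into the interior of $\mathcal{C}$ leaves $\mathcal{C}$ exactly once. It must leave through a facet not containing $\mathbf{p}_i$, and these are the triangles $f_0,\dots,f_{k-1}$. Equivalently, the radial projection from $\mathbf{p}_i$ maps the union of the $f_j$ ($j<k$) bijectively onto the set of inward directions; this projection is injective because $\partial\mathcal{C}$ is a convex surface. So the cones $\text{conv}(\mathbf{p}_i\cup f_j)$ cover $\mathcal{C}$, and two of them meet only in the cone over $f_j\cap f_{j'}$, which is a common face. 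Hence the interiors are pairwise disjoint and all orientations agree: they are all positive once the orientation of $\mathcal{C}$ is fixed, because each apex lies strictly on the inner side of its base facet.

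The main obstacle is the case where $\mathbf{p}_i$ lies on $\partial F_k$, the ``on'' part of the statement. If $\mathbf{p}_i$ is at a vertex $a$ or on an edge of $F_k$, then $\mathbf{p}_i$ is in the plane of every facet $f_j$ that contains that vertex or edge. The corresponding tetrahedra degenerate to zero volume. I would handle this by limiting from interior positions: the interiors of the cones remain pairwise disjoint, with some cones collapsed. So non-overlap holds in the weak sense (disjoint interiors), though not strict positivity. I would state this explicitly, noting that the algorithm's barycenter choice lies in the relative interior, where the strong conclusion holds.
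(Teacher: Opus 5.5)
Your proposal is correct, and it takes a different route from the paper. The paper argues locally. It asserts that non-overlap is equivalent to every interior face $(i,v_a,v_b)$ having its two opposite vertices $v_c,v_d$ on opposite sides of its plane. It then derives this from the dihedral angle of $\mathcal{C}$ at $(v_a,v_b)$ being less than $\pi$. You argue globally: radial projection from $\mathbf{p}_i$ shows that the cones $\mathrm{conv}(\mathbf{p}_i\cup f_j)$ tile $\mathcal{C}$ and meet in common faces.

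The paper's route is shorter and mirrors the per-element orientation check used in the optimization, but it leaves two steps unjustified.
\begin{itemize}
\item The local-to-global step. Consistent sidedness across interior faces does not by itself rule out tetrahedra winding twice around an interior edge $(i,v)$. One needs in addition, for instance, that the patch boundary is embedded.
\item The convexity step. It is stated for any plane containing $(v_a,v_b)$, but it holds only for planes through a point of $\mathcal{C}$ off both adjacent facet planes. That point is exactly what $\mathbf{p}_i$ in the relative interior of $f_k$ supplies.
\end{itemize}

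Your tiling argument gives global non-overlap directly, so it needs neither step. Your handling of $\mathbf{p}_i\in\partial f_k$ is also more accurate than the paper's. There the paper's strict inequality genuinely fails: $\mathbf{n}=0$ when $\mathbf{p}_i=\mathbf{p}_a$, and $(\mathbf{p}_c-\mathbf{p}_a)\cdot\mathbf{n}=0$ when $\mathbf{p}_i$ lies in the plane of $(v_a,v_b,v_c)$. Only your weak conclusion, disjoint interiors with some flat tetrahedra, survives in that case.

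One step you should write out is injectivity of the radial projection for directions lying in the plane of $f_k$. Those rays stay in $\partial\mathcal{C}$, so you need that a ray from a relative-interior point of the convex polygon $f_k$ meets its relative boundary exactly once.
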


\begin{proof}
A tetrahedral mesh is non-overlapping if and only if for every internal face, the two opposite vertices lie on opposite sides of that face.

Consider an arbitrary internal face $f = (i, v_a, v_b)$ shared by two tetrahedra $T_1 = (i, v_a, v_b, v_c)$ and $T_2 = (i, v_a, v_b, v_d)$. We must show that $v_c$ and $v_d$ lie on opposite sides of the plane $\Pi$ containing $f$.

\textbf{Key observation.} The face $f = (i, v_a, v_b)$ contains the edge $(v_a, v_b)$, which is an edge of the convex polyhedron $\mathcal{C}$. The vertices $v_c$ and $v_d$ are the two boundary vertices adjacent to this edge, forming boundary faces $(v_a, v_b, v_c)$ and $(v_a, v_b, v_d)$ of $\mathcal{C}$.

\textbf{Convexity guarantee.} By the convexity of $\mathcal{C}$, the dihedral angle at edge $(v_a, v_b)$ is less than $\pi$. This means that for \emph{any} plane $\Pi$ containing edge $(v_a, v_b)$, the vertices $v_c$ and $v_d$ lie on opposite sides of $\Pi$.

In particular, since vertex $i$ is positioned on or inside face $f_k$ of the convex polyhedron $\mathcal{C}$, and the plane $\Pi$ through face $(i, v_a, v_b)$ contains edge $(v_a, v_b)$, we have:
\[
[(\mathbf{p}_c - \mathbf{p}_a) \cdot \mathbf{n}] \cdot [(\mathbf{p}_d - \mathbf{p}_a) \cdot \mathbf{n}] < 0,
\]
where $\mathbf{n} = (\mathbf{p}_b - \mathbf{p}_a) \times (\mathbf{p}_i - \mathbf{p}_a)$ is the normal of $\Pi$.

Since this holds for every internal face, all tetrahedra in $\mathcal{P}_{\text{before}}$ are non-overlapping.
\end{proof}

\subsection{Ray-Facet Intersection Predicates}
\label{appendix:ray-intersection}

This section provides the ray-facet intersection predicates used in curve tracking (Section~\ref{subsubsec:curve-tracking}).

\subsubsection{Ray-Edge Intersection (2D)}

Given a ray starting at $\mathbf{V} \in \mathbb{R}^2$ with direction $\mathbf{k}$, and an edge with endpoints $\mathbf{a}, \mathbf{b} \in \mathbb{R}^2$, we seek parameters $t$ and $u$ satisfying:
\begin{equation}
  \mathbf{V} + t\mathbf{k} = \mathbf{a} + u(\mathbf{b} - \mathbf{a}).
\end{equation}

Let $\mathbf{v}_1 = \mathbf{b} - \mathbf{a}$ and $\mathbf{v}_2 = \mathbf{a} - \mathbf{V}$. The determinant is:
\begin{equation}
  \Delta = \mathbf{k}_x \mathbf{v}_{1y} - \mathbf{k}_y \mathbf{v}_{1x}.
\end{equation}

If $\Delta \neq 0$:
\begin{equation}
  t = \frac{\mathbf{v}_{2x} \mathbf{v}_{1y} - \mathbf{v}_{2y} \mathbf{v}_{1x}}{\Delta}, \qquad u = \frac{\mathbf{k}_x \mathbf{v}_{2y} - \mathbf{k}_y \mathbf{v}_{2x}}{\Delta}.
\end{equation}

An intersection exists if $t > 0$ and $u \in [0, 1]$.

\subsubsection{Ray-Triangle Intersection (3D)}

Given a ray starting at $\mathbf{V} \in \mathbb{R}^3$ with direction $\mathbf{k}$, and a triangle with vertices $\mathbf{a}, \mathbf{b}, \mathbf{c} \in \mathbb{R}^3$, we compute the normal $\mathbf{n} = (\mathbf{b} - \mathbf{a}) \times (\mathbf{c} - \mathbf{a})$ and the ray-plane intersection:
\begin{equation}
  t = \frac{\mathbf{n} \cdot (\mathbf{a} - \mathbf{V})}{\mathbf{n} \cdot \mathbf{k}}.
\end{equation}

If $t > 0$, compute $\mathbf{p} = \mathbf{V} + t\mathbf{k}$ and its barycentric coordinates $(w_0, w_1, w_2)$ with respect to $(\mathbf{a}, \mathbf{b}, \mathbf{c})$. The point lies inside the triangle if $w_i \geq 0$ for all $i$.

\subsubsection{Implementation Notes}

For robustness in topology-preserving curve tracking, these predicates should be evaluated using exact arithmetic. We represent all coordinates as rational numbers and perform all arithmetic operations exactly. While this incurs computational overhead, it guarantees that intersection tests are consistent and prevents numerical errors from violating topological invariants.

In practice, we first attempt intersection tests using double-precision floating-point arithmetic. If the result is near-degenerate (e.g., $\Delta \approx 0$ in 2D, or barycentric coordinates near boundary values), we fall back to exact rational arithmetic to ensure correctness.

\end{document}